\theoremstyle{definition}
\newtheorem{Definition}{Definition}
\theoremstyle{plain}
\newtheorem{Proposition}[Definition]{Proposition}
\newtheorem{Lemma}[Definition]{Lemma}
\def\cA{{\cal A}}
\def\cC{{\cal C}}
\def\cD{{\cal D}}
\def\cE{{\cal E}}
\def\cF{{\cal F}}
\def\cH{{\cal H}}
\def\cK{{\cal K}}
\def\cL{{\cal L}}
\def\cO{{\cal O}}
\def\cP{{\cal P}}
\def\cS{{\cal S}}
\def\bR{{\mathbb R}}
\def\sD{\mathscr{D}}
\def\sH{\mathscr H}
\newcommand{\abs}[1]{\lvert#1\rvert}
\begin{document}
\title{Building non commutative spacetimes at the Planck length for Friedmann flat cosmologies\footnotetext{\hspace{-.65truecm}Work supported by the {\bf ERC Advanced Grant} 227458 OACFT
  \emph{Operator Algebras \!\&\! Conformal Field Theory}.}}
\author{{Luca Tomassini\thanks{Dipartimento di Matematica,
Universit\`a di Roma ``Tor Vergata'',
Via della Ricerca Scientifica, 1, I-00133 Rome, Italy
E-mail: {\tt tomassin@axp.mat.uniroma2.it}} 
\, and 
{Stefano Viaggiu\thanks{Dipartimento di Matematica,
Universit\`a di Roma ``Tor Vergata'',
Via della Ricerca Scientifica, 1, I-00133 Roma, Italy.
E-mail: {\tt viaggiu@axp.mat.uniroma2.it}.}}}}
\date{\today}\maketitle
\begin{abstract}
We propose physically motivated
spacetime uncertainty relations (STUR) for  flat
Friedmann-Lema\^{i}tre cosmologies.
We show that the physical features of these STUR crucially depend on whether a particle horizon is present or not. In particular, when this is the case we deduce the existence of a maximal value for the Hubble rate (or equivalently for the matter density), thus providing an indication that quantum effects may rule out a pointlike big bang singularity. 
Finally, we costruct a concrete realisation of the corresponding quantum Friedmann spacetime in terms of operators on a Hilbert space. 
\end{abstract}

Keywords: non-commutative geometry, Friedmann cosmologies, quantum spacetime\\

PACS numbers: 02.40.Gh, 04.60.-m, 03.65.-w, 04.20.-q\\
{\it In loving memory of Francesco Saverio de Blasi, mathematician and friend.}

\section{Introduction}

According to General Relativity (GR), spacetime at sufficiently large scales can be described as a pseudo-Riemanian (classical) manifold $M$ locally modeled on Minkowski space with a metric $g_{\mu,\nu}=\text{diag}\{-1,1,1,1\}$, $\mu,\nu=0,\dots,3$. On the other hand, quantum matter in spacetime is described in terms of quantum fields on this manifold. However, it is a widespread idea that at sufficiently small scales the geometry of space-time becomes ``quantum'' or non commutative and loses its classical meaning. The relevant scale was recognised to be that of Planck's length
\begin{equation*}
\lambda_P=\sqrt{\frac{G\hbar}{c^3}}\simeq 1,6.10^{-33}cm,
\end{equation*}
$c$ and $G$ being respectively the speed of light and the gravitational constant. It is an interesting fact that historically one of the main motivations of this belief was the hope that the ultraviolet divergences that plague Quantum Field Theory (QFT) could be get rid of by making points ``fuzzy'' (Heisenberg himself suggested this possibility). However, not very much later the idea that this phenomenon could be rooted in intrinsic operational limitations to localisation measurements coming from the very first principles of General Relativity (GR) and non relativistic 
Quantum Mechanics (QM) emerged (see \cite{Mead}).

Later on, the emergence of Non Commutative Geometry \cite{connes} thanks to the work of Connes, Takesaki and many others has led to an interpretation of these facts in the following terms: spacetime should be considered as a non commutative manifold, and matter should be described by quantum fields on this non commutative manifold. A great amount of work has been dedicated to the task of defining and studying such new objects.
In this context, the idea that the structure of this non commutative spacetime should be derived by (or at least not be in contrast with) established physical principles was advocated by Doplicher, Fredenhagen and Roberts (DFR) in \cite{dop}\footnote{see also the interesting reviews \cite{dop3} -\cite{Pia}}. Roughly speaking, the  idea is as follows. To make observations in a given spacetime region with sides $\Delta x^\mu$, $\mu=0,\dots,3$, we must use (say) radiation with comparable wavelength. In case the localisation region is very small (so that the energy density can be assumed very big) GR tells us that this eventually lead to the formation of an event horizon, i.e. of a  black hole 
(\cite{Hawking1975}). Thus, no information could come out of the region itself, making the measurement meaningless.

In this paper we deal with the problem of extending this method to the case of flat Robertson-Friedmann-Walker cosmological spacetimes. To this end, we propose a  
modification of the isoperimetric conjecture, initially formulated  
for asymptotically flat spacetimes and proved for spherical collapses in \cite{Bi}, for Friedmann open flat cosmologies 
motivated by results of 
\cite{Br,Mal}, and use it together with Heisenberg's Principle to obtain STUR in this situation.\\
In section 2 we fix some notation and present our method for minkowskian backgrounds. In section 3 we discuss the condition for the formation of trapped surfaces in Friedmann flat cosmologies, while in section 4 our STUR are obtained. In the sections 5  we derive some physically relevant consequences of our STUR. In section 6 we consider inflationary cosmologies. Section 7 is devoted to the construction of a concrete model of quantum Friedmann spacetime. The mathematical tools necessary achieve this result are relegated to the Appendix.

\section{Preliminaries}

In this section we introduce a more thorough description of our procedure and some notation to be used later. In particular, we stress the quantum field theoretic nature of our approach, as opposed to methods based on non relativistic Quantum Mechanics\footnote{See for example \cite{Mead} for one of the first attempts, \cite{sabine} for a review and some bibliography, \cite{res} for a first assessment of the differences between the two points of view.}. For the sake of clarity and completeness, we also briefly recall previous results for Minkowski backgrounds.

\subsection{Localization and the notion of space-time point}\label{sub:local}

In classical general relativity one deals with a four dimensional manifold (spacetime) $M$ equipped with a lorentzian metric. All points in a given spacetime can be labeled by (local) coordinates $x^\mu$. Given this structure, \textit{relativistic} quantum matter is introduced by means of quantum \textit{fields}. Historically, since gravitational effects are negligible at the relevant scales, the formalism was developed for Minkowski spacetime (still to be denoted by $M$), but the extension to curved geometries is nowadays completely clear. However, for the sake of clarity, we will present the idea underlying our approach in the flat case. The generalisation is (at least in principle) more or less straightforward. 

We already stressed that our point of view is substantially different from the ones based on non relativistic Quantum Mechanics. In the non relativistic case, in fact, the position of a single particle is an observable represented by a selfadjoint operator and the standard deviation $\Delta x^\mu$ is a good measure of the limits of our knowledge of it when the particle is in some specific state. For example, in Heisenberg's microscope experiment the operators $x_i$, with $i=1,2,3$ are position observables associated, say, to a specific electron. However, as we shall now briefly recall, this is \textit{not} the case in relativistic Quantum Field Theory (QFT).

A field is usually viewed as a way to attach physical degrees of freedom to a spacetime point. This means that the physical quantities the theory speaks about are precisely those degrees of freedom, while a point is to be seen as a mere label or (if coordinates are introduced) parameter. Its status may be compared with the one of time in non-relativistic Quantum Mechanics. This is why in QFT one (loosely) speaks about observables $O(x)$ at a point $x$ and we are not interested in limitations in the measurement of the position of some particle, but  in limitations  that the presence of gravity puts on localisability of quantum field theoretic observables. 

However, quantum fields do not really make sense when evaluated at a point. More precisely,
they are operator-valued tempered distributions: after ``smearing'' with some Schwartz test function $f$ (that is evaluated on $f$), they give (in general unbounded) operators acting on some Hilbert space $\cH$ with scalar product $(\cdot,\cdot)$ containing a distinguished vector $\psi_\omega$ called the vacuum. The Hilbert space $\cH$ is usually supposed to carry a unitary representation $U$ of the Poincar\'e group $\cP$. There is also mass operator $M=P^\mu P_\mu$, $\mu=0, \dots, 3$, where the $P_\mu$'s are the generators of translations. Moreover, there is a unique vacuum vector $\psi_\omega$, defined by $U(\Lambda)\psi_\omega=0$ for every $\Lambda\in\cP$. Conventionally (to fix ideas we will consider a real scalar field), we write
\begin{equation} 
\Phi(f)=\int_M \Phi(x) f(x) dx \qquad\qquad f\in\mathcal{S}.
\end{equation}
It is usually assumed that from the operators $\Phi(f)$, $f\in S$ with $\text{supp} f=\mathcal{O}$, we can construct all quantum observables localised in the region $\cal O$. They form an algebra $\cA(\cO)$ of operators that we label by $\cal O$ to indicate that they are localised there\footnote{For $\Phi$ a scalar field, the quantities $\Phi(f)$, $f\in S$, will themselves be observables. This is not true, for example, for a fermionic field.}.
Of course, the crucial ingredient of locality comes from the commutation relations of fields. From these one infers
\begin{equation}
[ \Phi(f_1),\Phi(f_2)] =0
\end{equation}
whenever $\text{supp} f_1$ and $\text{supp} f_2$ are space-like separated. The corresponding algebras then commute and observables with casually disjoint supports are compatible in the sense of Quantum Mechanics.\\
So far so good for what concerns observables. On the other hand, a \textit{state vector} $\psi\in\cH$ is said to be localised in some spacetime region $\cO$ if it ``looks like the vacuum'' for observations with support space-like separated from $\cO$, that is $(\psi,A\psi)=(\psi_\omega, A\psi_\omega)$ for \textit{any} $A\in\cA(\tilde{\cO})$ and $\tilde{\cO}$ space-like from $\cO$ \cite{knight}. Such vectors are generated from the vacuum by partial isometries in $\cO$, a typical example being $e^{i\Phi(f)}\psi_\omega$, $\text{supp} f\in \cO$ \cite{licht}.

To describe particles, one must first single out a single particle space, a subspace $\cH^{(1)}\subset\cH$. This is done by exploiting the assumed existence of the representation $U$ of $\cP$: it corresponds to the discrete part of the spectrum of the mass operator $M=P^\mu P_\mu$. Thus, vectors in $\cH^{(1)}$ represent single particle states. However the crucial fact, as first pointed out by Newton and Wigner \cite{wigner}, is that such vectors \textit{cannot} be interpreted as spacetime wave functions for the corresponding particle and this entails the well known problems in the definition of a position (or coordinate) observable in QFT.

Finally, we point out that Heisenberg's microscope experiment is described in the language of QFT by making use of the complicated machinery of scattering theory. Unfortunately, for the moment we deal with much more basic matters and as a consequence we absolutely do \textit{not} claim that the results concerning the localisation of particles obtained in non relativistic approaches are wrong, but only that at they are not relevant for the scopes of this paper. 

Consider now a sequence $f_n \to \delta_{p}$ (in a suitable sense). The spacetime supports $\cO_n$ of the operators $\Phi(f_n)$ shrink to the point $p$ with coordinates $x_p$ and so will the ones of the observables obtained from them. By using the partial isometries mentioned before, we obtain corresponding localized states. However, we have to pay a price: the average energy of the states will increase to infinity as the support shrinks to a point, as it can be seen by a simple application of Heisenberg's uncertainty relations (see below for more details). But if gravitational effects are disregarded there is in principle no higher bound whatsoever to the energy density of a state so that we can ideally approach sharp localisation of observable quantities at a point as much as we want, and in this way attach an operational meaning to the concept of point $p$ with coordinates $x_p$.

When gravity is taken into account, however, things change drastically. We are no more free to increase the energy density as much as we want due to the possible production of event horizons, which hide the region to a (distant) observer. This motivated in \cite{dop} the formulation of a Principle of gravitational stability against localisation of events (PGSL):
\begin{itemize}
 \item[] The gravitational field generated by the concentration of energy required
by the Heisenberg uncertainty principle to localise an event
in spacetime should not be so strong to hide the event itself to any
distant observer - distant compared to the Planck scale.
\end{itemize}
It should be clear that the terminology ``event'' is used above (and below) in the same sense
as Einstein did: an event is \textit{not} a physical process but rather a place (specified by some parameters called coordinates) where a physical process could take place.

To implement the PGSL, two ingredients are needed:
\begin{itemize}
 \item[1)] general conditions for the (non) formation of horizons on the relevant background;
 \item[2)] an estimate of the energy of the localised matter or radiation.
\end{itemize}
It was thus natural to begin with the minkowskian case, first described in \cite{dop}. There, a linear approximation of Einstein's equations was used as for $1)$ and the Heisenberg uncertainty relations for $2)$. We stress that the fact that the energy content of the localisation experiment was evaluated making use of \textit{non} relativistic Quantum Mechanics has nothing to do with the general quantum field theoretic point of view but must be regarded as a (certainly rough but surprisingly effective) approximation\footnote{In the case of a (three)-spherical localisation region $\cS$, it was possible to obtain mathematically rigorous estimates in a more realistic relativistic setting, that is describing matter by relativistic quantum fields \cite{dop2}. The results were completely similar.}.

All in all, they found
\begin{gather}\label{dfr}
c\Delta t\left(\Delta x^1+\Delta x^2+\Delta x^3\right)\geq {\lambda}^{2}_{P},\\
\Delta x^1\Delta x^2+\Delta x^1\Delta x^3+\Delta x^2\Delta x^3\geq {\lambda}^{2}_{P}.
\end{gather}
where the $\Delta x^\mu$'s, $\mu=0,\dots,3$ should be interpreted as lengths of the edges of the ``localising box''. The idea was to identify $\Delta x^\mu$, $\mu=0,\dots,3$ with the mean standard deviations
\begin{equation}
  \Delta x^\mu \doteq \Delta_{\omega_\phi} x^\mu = \sqrt{\omega_\phi((x^\mu)^2)-\omega_\phi(x^\mu)^2} = \sqrt{(\phi, (x^\mu)^2\phi)-(\phi, x^\mu \phi)^2},
 \end{equation}
of suitable hermitian operators $x^\mu$, $\mu=0,\dots,3$, acting on some Hilbert space $\cH$ with scalar product $(\cdot,\cdot)$ and satisfying commutation relations such that \eqref{dfr} follow (here $\omega_\phi$ is a so called vector state and $\phi\in \cH$). The $x^\mu$ are regarded to as (global) quantum coordinates on a quantum Minkowski spacetime $\cE$, which in turn was (as usual in non commutative geometry) identified with the $C^*$-algebra (see \cite{sakai}) they generate\footnote{To be more precise, an abstract analysis of \eqref{dfr} was performed in \cite{dop} to determine an \textit{abstract} $C^*$-algebra with generators $x^\mu$ having a (unique in this case) representation on some Hilbert space in which the $x^\mu$ are represented by hermitian operators and satisfy \eqref{dfr} for any state $\omega$ on $\cE$.}. This is of course in complete analogy with Quantum Mechanics, but with the crucial difference that here the same mathematical objects are interpreted in a completely different way. 

We stress that, coherently with a quantum field theoretic point of view, the $x^\mu$ should be seen as a space of non commutative \textit{parameters} that (together with the corresponding states on $\cE$) describe ``non commutative localisation properties'' of quantum fields now defined on $\cE$. These fields were actually constructed in \cite{dop} (but see also \cite{dop3},\cite{dop4}).

\subsection{Beyond the linear approximation: Penrose inequality}\label{secmink}

The fact that the linear approximation is unsatisfactory in deriving conditions that black holes do not form was of course clear since the beginning\footnote{Sergio Doplicher, private communication}. 
The main idea in \cite{TV}, which for future use we shortly recall below, was to replace it by the Hoop Conjecture, expressed in terms of Penrose's inequality (see \cite{Mars,S,B,F} and references therein): 
\begin{itemize}
 \item[] For asymptotically flat data, horizons form if and only if
\begin{equation}
A < 16\pi\frac{G^2}{c^4}M^2,
\label{1}
\end{equation}
where $A$ is the proper area enclosing the collapsing object of total mass $M$.
\end{itemize}
At this point, it is important to notice that the mass present in (\ref{1}) is nothing else the total mass (ADM mass) of the  black hole, i.e. the proper mass $M_p$ together with the (negative) contribution due to the binding gravitational energy, so that $M\leq M_p$. However, since to obtain our STUR we use the Heisenberg inequalities (\ref{Hei}), we need a formulation of the Penrose inequality in terms of a proper local mass within a finite region of proper area $A$. To this purpose, in the limiting case of asymptotically flat spacetimes with
spherically symmetric collapsing masses, 
we have at our disposal the theorems in \cite{Bi}, where the condition that horizons form is expressed only in terms of proper lengths and proper masses. Thanks to these theorems, 
we could be tempted to write the Penrose inequality (\ref{1}) in terms of $M_p$ to obtain a less restrictive and only necessary condition. Fortunately, we only need a sufficient condition to avoid horizon formation in a thought experiment, as provided by inverting (\ref{1}) and setting $M=M_p$. In the following, we drop the subscript ``$p$'' for masses and energy enclosed within $A$.
To summarise, we obtain the following \textit{sufficient} condition for no black hole formation as:
\begin{equation}
A\geq 16\pi \frac{G^2}{c^4}M^2.
\label{2}
\end{equation}
We need to evaluate the proper area $A$ of the localising region. As a working approximation and following \cite{TV}, we take the one of the background (that is the chosen spacetime \textit{without} the experiment). Choosing cartesian coordinates and indicating by $\Delta x_i$, $i=1,2,3$, the sides' lengths of a parallelepiped, we may write
\begin{equation}
A = \frac{\Delta A}{\beta^2}=\frac{\Delta x^1\Delta x^2+\Delta x^1\Delta x^3+\Delta x^2\Delta x^3}{\beta^2}, \label{deltaA}
\end{equation}
in view of the fact that our background is now by assumption minkowskian.
Note that in order to ensure that in the spherical case ($\Delta x^1=\Delta x^2=\Delta  x^3 =2\Delta R$ with obvious notation) we have $A=4\pi{\Delta R}^2$, we need to take $\beta^2=3/(\pi)$ \cite{sabine}.

We estimate the energy of the collapsing field configuration making use of Heisenberg's uncertainty relations
\begin{equation}\label{Hei}
\Delta x_j\Delta p_j \geq \frac{\hbar}{2},\;i,j=1,2,3 \qquad\qquad \Delta t\Delta E\geq \frac{\hbar}{2},
\end{equation}
(here $p$ is the momentum, $t$ the time, $E$ the energy). For a single particle we have $E^2\geq c^2 (p_1^2+ p_2^2+  p_3^2)$, so that substituting in (\ref{2}) we arrive at the relation:
\begin{equation}\label{ssmink}
\Delta A \geq 12 \lambda^4_P
\left[{\left(\frac{1}{\Delta x^1}\right)}^2+{\left(\frac{1}{\Delta x^2}\right)}^2+{\left(\frac{1}{\Delta x^3}\right)}^2\right].
\end{equation}

To express the preceding inequality in terms of geometrical quantities only, we observe that
\begin{equation}\label{schw}
\frac{\Delta A}{\sqrt{3}\Delta V} \leq \sqrt{{\left(\frac{1}{\Delta x^1}\right)}^2+{\left(\frac{1}{\Delta x^2}\right)}^2+{\left(\frac{1}{\Delta x^3}\right)}^2} \leq \frac{\Delta A}{\Delta V},
\end{equation}
where $\Delta V=\Delta x^1 \Delta x^2 \Delta x^3$ stands for the three-volume of the localising box. This shows that the ratio $\Delta A /\Delta V$ has exactly the same behaviour than the right hand side of \eqref{ssmink} as a function of $\Delta x^i$, $i=1,2,3$. To get a sufficient condition we use the right hand side inequality, so that the final form of our minkowskian space-space STUR is
\begin{equation}
{(\Delta_\omega x^1)}^2 {(\Delta_\omega x^2 )}^2 {(\Delta_\omega x^3 )}^2 \geq 12
\lambda^{4}_{P}
\left(\Delta_\omega x^1\Delta_\omega x^2+\Delta_\omega x^1\Delta_\omega x^3+\Delta_\omega x^2\Delta_\omega x^3\right).\label{mink1}
\end{equation}
while for our minkowskian time-space STUR we obtain (this time we use the time-energy Heisenberg relation)
\begin{equation}
{(c\Delta_\omega t)}^2\left(\Delta_\omega x^1\Delta_\omega x^2+\Delta_\omega x^1\Delta_\omega x^3+\Delta_\omega x^2\Delta_\omega x^3\right) \geq 12
\lambda^{4}_{P}.\label{mink2}
\end{equation}
Here, we added the subscript $\omega$ to emphasise the dependence on states now that the inequalities have a quantum interpretation. For later reference, we mention that
\begin{equation}
\Delta_\omega x^1\Delta_\omega x^2\Delta_\omega x^3 
\geq \sqrt[4]{12^3} \lambda_P^{3},
\label{36}
\end{equation}
must then hold \cite{TV}, so that a lower bound for the volume appears. From it, the existence of a mean maximal mass-energy density for such states can be deduced.

Significantly, the DFR uncertainty relations could be recovered as a weakening\footnote{In particular, the ratio $\Delta A /\Delta V$ is substituted by the smaller quantity $(\Delta A)^{-1/2}$. Thus, localisation regions with zero volume are allowed.} of \eqref{mink1},\eqref{mink2} by making use of the following simple algebraic inequalities
\begin{gather}
(a+b+c)^2\geq ab+bc+ac, \label{algebraic}\\
(ab+bc+ac)^3\geq a^2 b^2 c^2.\nonumber
\end{gather}
It is thus reasonable to regard the DFR commutation relations as a reasonable approximation to commutation relations implementing \eqref{mink1},\eqref{mink2}. In Section \ref{sec:quantum}, we will use the same approximation to build a DFR-like quantum Friedmann 
expanding spacetime.

\section{Isoperimetric inequalities in Friedmann-Walker cosmological backgrounds}

In order to extend the reasoning presented in section \ref{secmink} to cosmological spacetimes, we face the problem that condition (\ref{2}) does not hold in general non asymptotically flat spacetimes. As it is well known, establishing mathematically rigorous conditions about the formation of horizons is a formidable task. However, if we restrict our attention to trapped surfaces, then inequalities are at our disposal in the form of exact theorems that provide necessary or sufficient conditions that they do not form. 

A first class of results (see \cite{Br}) 
concerns spherically symmetric collapses in open spatially flat universes 
under the assumption that initial data are given such that the trace of the extrinsic curvature is spatially constant. 
They assert that a spatial spherical surface $S$ with proper radius
\begin{equation}
L_S=\int_{0}^{R}a(t)dr=a(t)R,\,\,A =4\pi a(t)^2R^2,\label{4}
\end{equation}
and area $A$ is not trapped if and only if the amount of the 
positive excess proper mass $\delta M$ 
(the extra mass apart from the constant background density filling the spacetime) inside $S$ satisfies
\begin{equation}
 \frac{L_S}{2}+\frac{HA}{4\pi c} > \frac{G}{c^2}\delta M,\label{3}
\end{equation}
The excess proper mass $\delta M_S$ must be positive, i.e. the weak energy condition holds. The proper quantities $L_S$ and $A$ both refer to the 
spacetime with the excess mass $\delta M$, i.e. the backreaction is taken into account.
The condition approaches Penrose's inequality (\ref{2}) for the spherical case as we take smaller values of $H$\footnote{It is worth noticing that the STUR above can be extended to a de Sitter spacetime: the only change to be made is to set $H(t)=c\sqrt{\Lambda/3}$.}.\\
Moreover, for spacetimes conformally equivalent to Friedmann flat solutions (\textit{i.e.} with metric $\hat{g}_{\mu\nu}=\Phi^4 g_{\mu\nu}$ where $g_{\mu\nu}$ is Friedmann)
a theorem in \cite{Mal} gives a necessary condition that a (generic, not necessarily spherical) equipotential spatial surface\footnote{This is a surface on which $\Phi$ is constant. It can be shown that outside matter this is equivalent to constancy of constant redshift.} $S$ with proper area $A$ be trapped:
\begin{equation}
\frac{G}{c^2}\delta M \geq \frac{1}{2}\sqrt{\frac{A}{\pi}}+A\sqrt{\frac{\rho}{6\pi}},
 \end{equation}
with $\delta M$ the excess proper mass and $\rho = 3H^2/8\pi G$. Inverting the inequality, we get the desired sufficient condition that black holes do not form.

Summing up, it is reasonable to consider the following generalisation of Penrose's isoperimetric inequality for Friedmann flat expanding cosmologies: {\it black holes do not form if the (positive) excess of proper mass} $\delta M$ {\it inside a surface $S$ of proper area $A$ satisfies the inequality}:
\begin{equation}
\frac{\sqrt{A}}{4\sqrt{\pi}}+\frac{HA}{4\pi c} \geq \frac{G}{c^2}\delta M.\label{7}
\end{equation}
In the next section we apply the former generalised isoperimetric conjecture to obtain new STUR.

\section{STUR in Friedmann flat spacetimes: general case}\label{sec:general1}

We can take advantage of inequality (\ref{7}) to obtain new STUR. Suppose then we want to enclose a certain amount $\delta E=c^2 \delta M$ of energy during a fixed time interval $\Delta t$ (here and in the following $t$ will indicate \textit{proper} time) in a certain space box $S$ of proper area $A$, with the condition that no horizons form during the experiment. Then, we see from the preceding section that it is reasonable to impose
\begin{equation}
\frac{\sqrt{ A}}{4\sqrt{\pi}}+\frac{\langle H\rangle A}{4\pi c} \geq \frac{G}{c^4}\delta E.\label{20}
\end{equation}
where by $\langle H\rangle$ we indicate the mean value of $H$ during $\Delta t$.\\
First, to evaluate $\delta E$ we shall use the quasi-local energy (see \cite{TV,H,Sza}). Second, we stress that according to \eqref{7} only proper distances should appear in \eqref{20}. However, in the following we will estimate the proper area of the localising box.\\ 
With this in mind, we introduce a tetrad frame  with spatial axes along the sides of the box. 
Such an experimenter measures the proper time $t$ and the (infinitesimal) proper length at a given fixed time $t$ in terms of ${\eta}^{(a)}$:
\begin{equation}
d{\eta}^{(a)}= e^{(a)}_{\mu}dx^{\mu}, \label{210}
\end{equation}
where ${x^{\mu}}=(t,x^i)$ and $(a)=0,1,2,3$. From now on, we shall drop any specific notation for tetrad indices.
Since we want to test the non commutative structure at Planck's length, it is natural to suppose that the ``proper'' localising box has comparable dimensions. Thus, to determine the proper lengths of the spatial sides of the box, an integration of \eqref{210} at a given fixed time $t$ and over a small spatial domain suffices. This gives
$\Delta \eta^i = \eta^i_2 - \eta^i_1$,
where of course $\eta^i_k$, $k=1,2$, indicate the 'proper'
coordinates of the two extremes of the relevant edge and, thanks to \eqref{210},
\begin{equation}
{\eta}^i= a(t) x^i.\label{21}
\end{equation}
Furthermore, since at a fixed arbitrary $t$ the metric is spatially flat, the area of the box has the same form of the minkowskian one, albeit expressed in terms of the proper length of the edges $\Delta{\eta}^{(a)}$. Moreover, since in a Friedmann universe the photon energy is $E=hc/\lambda a(t)$ with $\lambda a(t)$ the photon
proper wavelength, a comoving observer is legitimated to write the Heisenberg uncertainty relations as follows:
\begin{gather}
\Delta_\omega E\Delta_\omega t\geq\frac{\hbar}{2},\qquad
\Delta_\omega p_{{\eta}^i}\Delta_\omega{\eta}^i\geq\frac{\hbar}{2},\;\;\forall i=1,2,3.\label{23}
\end{gather}
for any state $\omega$.
The derivation of the STUR in a generic stationary asymptotically flat background now parallels the procedure in Section 2, with the only proviso that we must rewrite everything in terms of the quantities $\Delta_\omega \eta^i$, $i=1,2,3$ and $\Delta_\omega t$\footnote{We remind the reader that the symbols $\Delta_\omega A$ or $\Delta_\omega V$ are shorthands for the corresponding combinations of these uncertainties and do \textit{not} indicate standard deviations of area or volume \textit{operators}.}. In particular,
we derive the space-space STUR using  inequalities
\eqref{schw} to estimate the right hand side of \eqref{20} in terms of the ratio $\Delta_\omega A/\Delta_\omega V$. 
Thus, the final form of our space-space STUR is
\begin{equation}
\frac{\sqrt{\Delta_\omega A}}{4\sqrt{3}}+\frac{\omega(H)\Delta_\omega A}{12c}\geq \frac{\lambda^2_P}{2} \frac{\Delta_\omega A}{\Delta_\omega V}.\label{27}
\end{equation}
For the time-space STUR, thanks to (\ref{20}) and (\ref{23}) we get:
\begin{equation}
c\Delta_\omega t\left(\frac{\sqrt{\Delta_\omega A}}{4\sqrt{3}}+\frac{\omega(H)\Delta_\omega A}{12c}\right)\geq \frac{\lambda^2_P}{2}.\label{28}
\end{equation}
Expressions (\ref{27}), (\ref{28}) represent the generalisation of the minkowskian STUR in \cite{TV} to Friedmann flat spacetimes.\\
It is important to note that under very reasonable hypothesis $\omega(H) >0$ and the presence of a non zero Hubble flow reflects itself in {\it weaker} uncertainty relations than in the flat case. So to put it, since making black holes in expanding spacetimes is harder then localisation is easier. One may think of this fact in terms of the appearance of an effective (smaller, $H$-dependent) Planck length. Correspondingly, we may say that the minkowskian STUR are \textit{stronger} the Friedmann-like ones: the inequalities are stricter. We conclude that should we consider analogous less stringent inequalities then Friedmann's, then we would be departing even more from the flat case. In Section \ref{sec:quantum}, when presenting a concrete realisation of a Friedmann-like non commutative spacetime, we will actually start from a more workable less stringent form of (\ref{27}), (\ref{28}).

\section{Big bang Friedmann flat cosmologies}\label{sec:general2}

In this Section we shall study some general properties of states ``far from'' the big bang singularity which can be drown from the STUR \eqref{27},\eqref{28}. We will work under the following assumptions:
\begin{itemize}
 \item[a)] the existence of a quantum (non commutative) Friedmann spacetime, \textit{i.e.} a ($C^*$)-algebra $\cal E$ satisfying certain desirable properties as for example specified in subsection \ref{sub:local} and Section \ref{sec:quantum}. The STUR are fulfilled with respect to all states of $\cal E$. Here and in the following any expressions involving the evaluation of states on (generalised, unbounded) elements of $\cal E$ will be assumed to make sense and the
 self adjoint generator $t$ to have a positive spectrum 
 (\textit{i.e.} $sp(t)\in{{R}}^{+}$).
 \footnote{This is motivated by the requirement of a reasonable classical limit, see section \ref{sec:quantum}.} 
 \item[b)] All states are considered to have geometrical meaning (see below).
\end{itemize}
Condition $b)$ is motivated by the fact that, as customary in the case of constrained quantum systems, not all states need to be considered. For example, in noncommutative geometry only \textit{pure} states (\textit{i.e.} states that are not mixtures) are identified with points. Moreover, in \cite{dop} among all pure states only those having a certain property of ``maximal localisation'' are considered. We emphasise that this is a completely different restriction than the one introduced in Subsection \ref{sec:horizon}, where in the presence of particle horizons we will introduce one more condition to select, among the geometrically meaningful ones, states corresponding to physically realisable localisation experiments.

\subsection{Away from the big bang}

To start with, we need a quantitative characterisation of what we mean for a state $\omega$ to be ``away from the big bang''. To this end, we observe that on the one side the quantity $1/H(t)$ is classically an approximation of the age of the average universe, and on the other it is reasonable to ask for the preparation of such a state (localisation experiment) not to have started at ages near the big bang. Thus, we might wish to impose the condition that light take a smaller time than $\omega(H^{-1})$ to travel the whole diameter of the localisation region itself. This means asking that $\sqrt{\Delta_\omega A} << c \omega(H^{-1})$. However, for the sake of simplicity, we will make use of the inequality $1\leq \omega(a)\omega(a^{-1})$ (valid for positive invertible operators $a$) and impose the thus stronger requirement that
\begin{equation}
\omega(H)\sqrt{\Delta_\omega A} << c.
\label{away}
\end{equation}
To see what this means, consider a box of size $\sim 10^{35}\lambda_P \simeq 1$ meter. Condition
(\ref{away}) implies that $\omega(H) << 10^{8}\; s^{-1}$. As a result, for a macroscopic box of this size also the beginning of the nucleosynthesis ($t\sim 10^{-2}\;s$) can be considered ``away from the big bang''. However, taking
a box of size $\sim {\lambda}_P$ and assuming $\omega(H)\simeq \omega(t)^{-1}$, condition (\ref{away}) takes the form $\omega(t)>> t_P$.\\
Using (\ref{away}) to eliminate the $H$ dependence in (\ref{27}),(\ref{28}), for states $\omega$ away from the big bang we see that the inequalities
\begin{gather*}
{(c\Delta_\omega t)}^2\left(\Delta_\omega\eta^1\Delta_\omega \eta^2+\Delta_\omega \eta^1\Delta_\omega \eta^3+\Delta_\omega \eta^2\Delta_\omega \eta^3\right) \geq u_1
\lambda^{4}_{P},\\
{(\Delta_\omega \eta^1)}^2 {(\Delta_\omega \eta^2 )}^2 {(\Delta_\omega \eta^3 )}^2 \geq u_2
\lambda^{4}_{P}
\left(\Delta_\omega \eta^1\Delta_\omega \eta^2+\Delta_\omega \eta^1\Delta_\omega \eta^3+\Delta_\omega \eta^2\Delta_\omega \eta^3\right),
\end{gather*}
with some constants $u_1, u_2>0$ must be automatically satisfied. They are completely analogous to the minkowskian case, so that as in Subsection \ref{secmink} we have
\begin{equation*}
\Delta_\omega \eta^1\Delta_\omega \eta^2 \Delta_\omega \eta^3 \geq u_2^{3/4} \lambda^{3}_{P}.
\end{equation*}

\subsection{Cosmologies with particle horizons}\label{sec:horizon}

We now restrict our attention to power law cosmologies with particle horizons, where  $a(t)=t^{\alpha},\;\alpha\in(0,1)$ and $H(t)=\alpha/t$. Classically, for an observer at (proper) time $t$ the particle horizon is the maximal proper spatial region that is in causal relation with him. It is a spherical region with proper radius 
\begin{equation}
{\eta}_h(t)= a(t)\int_{0}^{\bar t}\frac{cdt^{\prime}}{a(t^{\prime})},\label{H1}
\end{equation}
so that
\begin{equation}
{\eta}_h(t)=\frac{ct}{(1-\alpha)}=\frac{c\alpha}{(1-\alpha)H(t)}.
\label{H2}
\end{equation}
As a result, an experimenter at time $t$ cannot localise spatial regions with radius bigger than \eqref{H2}. Points outside the horizon certainly have a geometrical meaning, but since are completely inaccessible to the given observer have no physical meaning at all for him. For this reason, when moving to the quantum level we will consider a geometrically meaningful state to be physical whenever
\begin{equation}
\max_{i=1,2,3}\{ \omega(H)\Delta_\omega {\eta}^i\}\leq\frac{\alpha c}{(1-\alpha)},\label{H2a}
\end{equation}
so that we obtain
\begin{equation}
\omega(H) \sqrt{\Delta_\omega A}\leq \sqrt{3}\frac{\alpha c}{(1-\alpha)}.\label{H3}
\end{equation}
Combining (\ref{H3}) with (\ref{27}), (\ref{28}) we obtain
\begin{gather*}
{(c\Delta_\omega t)}^2\left(\Delta_\omega\eta^1\Delta_\omega \eta^2+\Delta_\omega \eta^1\Delta_\omega \eta^3+\Delta_\omega \eta^2\Delta_\omega \eta^3\right) \geq 12
\tilde{\lambda}_P^{4},\\
{(\Delta_\omega \eta^1)}^2 {(\Delta_\omega \eta^2 )}^2 {(\Delta_\omega \eta^3 )}^2 \geq 12
\tilde{\lambda}_P^{4}
\left(\Delta_\omega \eta^1\Delta_\omega \eta^2+\Delta_\omega \eta^1\Delta_\omega \eta^3+\Delta_\omega \eta^2\Delta_\omega \eta^3\right),
\end{gather*}
where
\begin{equation}
{\tilde{\lambda}}_P=\lambda_P\sqrt{1-\alpha}.\label{H6}
\end{equation}
Thus, inequalities formally identical to the minkowskian STUR are satisfied, albeit with an ``effective'' Planck's length\footnote{Note that $\tilde{\lambda}_P$ becomes smaller when approaching $\alpha=1$, the limit value for inflationary cosmologies. Thus, for these cosmological solutions of Einstein's equations the effective length scale can be smaller than $\lambda_P$. In any case cosmologies with 
$\alpha\in(2/3,1)$ have negative hydrostatic pressure $P$ and then do not represent usual matter.} ${\tilde{\lambda}}$, with all consequences. In particular, the existence of a minimal volume is especially remarkable, because it indicates that if the expansion is sufficiently fast at early times the restriction to physical states may rule out a big bang pointlike singularity.

We now provide some more evidence towards this conclusion. Although as expected these imply no lower limit on the precision of the measurement of a \textit{single} (proper) coordinate, it is not difficult to see that the space-space minkowski like inequality provides
\begin{equation}
 \max_{i=1,2,3}\{\Delta_\omega \eta^i\}\geq \sqrt{6}{\tilde{\lambda}}_P. \label{max}
\end{equation}
But then from \eqref{H2a} we immediately deduce 
\begin{equation} 
\omega(H)\;\leq\frac{\alpha}{(1-\alpha)\sqrt{6(1-\alpha)}t_P}.
\label{H9}
\end{equation}
Now, taking into account the classical Einstein's equations $H^2=8\pi G\rho/3$, we infer
\begin{equation}
\omega(\rho)_\textrm{max}=\frac{\alpha^2}{16\pi{(1-\alpha)}^3}{\rho}_P,
\label{dens}
\end{equation}
where $\rho$ indicates the mass density. This equality is consistent with well known results from loop quantum gravity (\cite{Ast} and more recently \cite{vid}) and can be regarded at as an additional indication that quantum mechanical effects prevent the appearance of a big bang singularity (when particle horizons are present).

\section{Inflationary models without big bang}

In section 5, we studied our STUR in spacetimes under the condition that $t$ be positive.
This condition is certainly suitable for cosmological models starting at $t=0$ with the big bang. 
However, for completeness
we can consider another class of models motivated within the  inflationary paradigm \cite{3a}-\cite{9a},
with $t$ ranging in the whole real line $\bR$ and a regular metric (no big bang) at $t=0$.\\ 
As an example, we can consider the proposal advanced in \cite{infl2}, where an over horizon universe is represented by a De Sitter metric with embedded spherical bubble of true vacuum evolving separately as Friedmann universes. 
As an extreme idealisation, it should be possible in principle to build an early inflationary
universe emerging from a Minkowski spacetime (see for example \cite{infl1}).\\
In any case, what it is important for our purposes is that in the usual approach the curvature of the spacetime is neglected,
since during the early inflationary epoch the universe expanded very quickly. Hence, the primordial inflation can be well depicted 
in terms of a spatially flat metric. For the description of inflation,
one introduces a potential
$V(\phi)$, where the scalar potential $\phi$ is a time function satisfying the Klein Gordon equation in a flat background:
\begin{equation}
\frac{{\phi}_{,t,t}}{c^2}+3H\frac{{\phi}_{,t}}{c^2}+\frac{dV}{d\phi}=0.
\label{infl}
\end{equation}
Note that the potential $V(\phi)$, after solving equation (\ref{infl}) in terms of  $\phi(t)$,
can also be expressed directly in terms of the cosmological time $t$  \cite{cher1,vinfl}.
This fact is important, since it allows to extend the construction of the next section 
to this case.
Moreover, as it is well known the introduction of a generic potential $V(\phi(t))$ is equivalent to adding in the Friedmann equations the contribution of a perfect fluid with 
pressure $p_{\phi}$ and energy density ${\rho}_{\phi}$ and equation of state $p_{\phi}=c^2\gamma(t)\rho_{\phi}$. The result is
\begin{gather}
H^2(t)=\frac{8}{3}\pi G\left(\rho+{\rho}_{\phi}\right),\label{I1}\\
{\rho}_{\phi}=\frac{{\phi}_{,t}^2}{2c^4}+\frac{V(\phi)}{c^2},\;\;\;p_{\phi}=\frac{{\phi}_{,t}^2}{2c^2}-V(\phi),\label{I2}
\end{gather}
where the density $\rho$ denotes other kinds of usual matter (radiation,  dark matter...) already
present in the universe.
As a consequence, even for these cosmological models the STUR (\ref{27}) and (\ref{28})
hold unchanged.
The solutions of (\ref{I1}) and (\ref{I2})  obviously lead to
Friedmann-Lemaitre cosmologies: 
they are spatially homogeneous and 
the dimension $s$ of the group of transitivity is $s=3$ and the one of the  group of isotropy is  $g=3$. 
Hence, such spacetimes always admit six Killing vectors, with the exception
of de Sitter spacetime ($s=4$ and $g=6$). This is the second ingredient needed to include all such spacetimes in the class of quantum models
constructed in the next section.

\section{Models of quantum expanding spacetimes}\label{sec:quantum}

In this section 
we outline the construction of concrete models of quantum Friedmann cosmological spacetime and then briefly discuss some of their properties. Notice that an assignment of a classical background is equivalent to fixing a specific Hubble flow $H$. Correspondingly, we will build models quantum Friedmann spacetime regarding $H$ as the given input. However, we will not take the STUR (\ref{27}), (\ref{28}) as our starting point but rather the following weaker but more workable version:
\begin{gather}\label{stur2}
\Delta_\omega A\left(\frac{1}{4\sqrt{3}}+\frac{\omega(H)\sqrt{\Delta_\omega A}}{12 c}\right)\geq \frac{\lambda^2_P}{2},\\
c\Delta_\omega t\left( \Delta_\omega \eta_1 +\Delta_\omega \eta_2 +\Delta_\omega \eta_3\right)\left(\frac{1}{4\sqrt{3}} +\frac{\omega(H)\sqrt{\Delta_\omega A}}{12 c}\right)\geq \frac{\lambda^2_P}{2},\label{stur3}
\end{gather}
where of course $\Delta_\omega A$ is just a shorthand as in \eqref{deltaA}. It is obtained by making use of inequality \eqref{algebraic} to eliminate the dependence on the three-volume $\Delta_\omega V$ in (\ref{27}) and substituting $\sqrt{\Delta A}\leq \sum \Delta \eta_i$ in (\ref{28}).

For the sake of definiteness, we collect below some desirable properties of a ``concrete quantum model''. Notice the reference to a specific Hilbert space: we will not consider the problem of defining an \textit{abstract} $C^*$-algebra whose inequivalent representations provide all (reasonable) ``concrete quantum models''.

{\bf Definition.} A (complex) *-algebra $\cal A$ of operators acting on some (separable) Hilbert space $\cH$ with (self adjoint unbounded) generators $\eta_\mu$, $\mu=0,\dots,3$, is said to be a concrete covariant realisation of the non commutative spacetime $\cal M$ corresponding to the (classical) spacetime $M$ with global isometry group $G$ if:
\begin{itemize}
 \item[1)] the relevant STUR are satisfied;
 \item[2)] there is a (strongly continuous) unitary representation of the global isometry group $G$ under which the operators $\eta$ transform as their classical counterparts (covariance);
 \item[3)] there is some reasonable classical limit procedure for $\lambda_P \to 0$ such that the $\eta_\mu$ become in an appropriate sense commutative coordinates on some space containing the manifold $M$ as a factor.
\end{itemize}

In what follows, we will focus on constructing quantum coordinate operators on some separable (\textit{i.e.} having a countable basis) Hilbert $\cH$ space having the properties stated in 1), 2) above. We recall that all flat Friedmann metrics share one important characteristic: six Killing vectors generating a $SO(3)\ltimes \bR^3$ global isometry group. Concerning the only exception, de Sitter spacetimes, one can show that our STUR still hold and a slight modification of the construction below provides operators with the correct commutation relations. However, in this case we do not have a ``quantum model'' since we cannot implement the whole (nonlinear) isometry group.

We leave for the future a more detailed study of crucial mathematical properties such as the existence of suitable invariant domains and so on. We expect them to be essential for the understanding of the representation theory of the ($C^*$)-algebra generated by the coordinate operators. Also, we will not try to seriously address item 3) but only ask that the spectrum sp$(\eta_k)$ of the quantum coordinate operators $\eta_k$, $k=1,2,3$ be always the real numbers $\bR$ and the spectrum sp$(t)$ of $t$ be the $\bR$ or just the strictly positive reals $\bR^+$.

A remark is here in order. In Quantum Mechanics what matters is not the concrete Hilbert space on which the observable operators act, but their spectrum and the abstract algebra they generate. To be more concrete, since for any two separable Hilbert spaces $\cH_1,\cH_2$ there is a unitary $W:\cH_1\to\cH_2$, then the physics provided (and the algebra generated) by any (bounded) operators $S,T$ on $\cH_1$ and their commutator $[A,B]$ will be the same as the one provided by $WAW^{-1}, WBW^{-1}$ and $W[A,B]W^{-1}=[WAW^{-1},WBW^{-1}]$.

For the sake of clarity we start by a lemma which motivates all the construction below. Roughly speaking, it says that if we find operators on some Hilbert space satisfying a suitably modified version of the minkowskian STUR (in which the Planck length so to say depends on $H(t)$) then we know that they satisfy \eqref{stur2},\eqref{stur3}.

\begin{Lemma}\label{fundamental}
Consider the equation $w^3 + k w^2 - 2\sqrt[4]{3}k = 0$ ($k= \sqrt{4c/3\lambda_P H}$, $H\geq 0$) and its solution $f_0(H)\colon \bR^+\to\bR$. Suppose the selfadjoint operators $t,\eta_k$, $k=1,2,3$, on the Hilbert space $\cH$ satisfy the uncertainty relations
\begin{gather}
 c\Delta_\omega t\left(\Delta_\omega{\eta}^1+\Delta_\omega{\eta}^2+\Delta_\omega{\eta}^3\right) \geq 
\lambda^{2}_{P}\abs{\omega\left(f_0\circ H(t)\right)}, \label{eq:general1}\\
\Delta_\omega{\eta}^1\Delta_\omega{\eta}^2+\Delta_\omega{\eta}^1\Delta_\omega{\eta}^3+
\Delta_\omega{\eta}^2\Delta_\omega{\eta}^3\geq 
\lambda^{2}_{P}\abs{\omega\left(f_0\circ H(t)\right)} \label{eq:general2}.
\end{gather}
Then they also satisfy modified STUR \eqref{stur2}, \eqref{stur3}.
 \end{Lemma}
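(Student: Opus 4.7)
The plan is to identify the cubic $w^3 + kw^2 - 2\sqrt[4]{3}\,k = 0$ with the boundary of the modified space-space STUR \eqref{stur2} at fixed $\omega(H(t))$, and then to invoke an operator Jensen inequality to bridge from the hypothesis's operator-valued quantity $\omega(f_0\circ H(t))$ to the numerical value $f_0(\omega(H(t)))$ implicit in the conclusion.

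First I would reduce \eqref{stur2} to a cubic in one dimensionless variable. Setting $u = \sqrt{\Delta_\omega A}/\lambda_P$ and $H_0 = \omega(H(t))$, clearing denominators turns \eqref{stur2} into $(H_0 \lambda_P/c)\, u^3 + \sqrt{3}\, u^2 \geq 6$, whose equality case is a cubic in $u$. Dividing by the coefficient of $u^3$ and rescaling $u \mapsto \alpha w$ with $\alpha$ chosen so that the constant term assumes the form $2\sqrt[4]{3}\,k$ (rather than $2\sqrt{3}\,K$) reproduces the cubic of the statement, with $k \propto 1/\sqrt{H_0}$ as announced; the explicit coefficient $k = \sqrt{4c/(3\lambda_P H_0)}$ emerges by bookkeeping the constants through the rescaling. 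Existence and uniqueness of a positive root $w_* = f_0(H_0)$ follow from Descartes' rule of signs, and after undoing the change of variable the inequality $\Delta_\omega A \geq \lambda_P^2 f_0(H_0)$ is \emph{equivalent} to \eqref{stur2}.

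Second, the hypothesis \eqref{eq:general2} supplies $\Delta_\omega A \geq \lambda_P^2 |\omega(f_0\circ H(t))|$, where $f_0(H(t))$ is constructed from the self-adjoint generator $t$ by the Borel functional calculus. To replace $\omega(f_0(H(t)))$ by $f_0(\omega(H(t))) = f_0(H_0)$ I would invoke Kadison's operator Jensen inequality for the positive functional $\omega$: whenever $f_0$ is operator-convex on $\bR^+$, one has $\omega(f_0(H(t))) \geq f_0(\omega(H(t)))$. Convexity of $f_0$ can be read off the defining cubic by implicit differentiation: $f_0$ is strictly decreasing, positive, bounded above by $f_0(0)$, with power-law decay as $H\to\infty$ that is manifestly convex, and the sign of $f_0''$ is extracted from the implicit formula. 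Since $f_0 \geq 0$ the absolute value in \eqref{eq:general2} is redundant, so chaining $\Delta_\omega A \geq \lambda_P^2 \omega(f_0(H(t))) \geq \lambda_P^2 f_0(H_0)$ with the equivalence of step one delivers \eqref{stur2}. The time-space STUR \eqref{stur3} follows by exactly the same template: hypothesis \eqref{eq:general1} supplies the outer lower bound on $c\Delta_\omega t(\Delta_\omega \eta^1 + \Delta_\omega \eta^2 + \Delta_\omega \eta^3)$, and the bound on $\sqrt{\Delta_\omega A}$ just derived controls the second summand in the factor $(1/(4\sqrt{3}) + H_0\sqrt{\Delta_\omega A}/(12c))$.

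The main obstacle I anticipate is the rigorous operator-theoretic justification of the Jensen step when $H(t)$ is unbounded: one must verify that $f_0(H(t))$ lies in a domain on which the state $\omega$ extends as a positive linear functional and that $\omega$ induces, through the spectral theorem applied to $t$, a genuine probability measure on the spectrum of $H(t)$ to which the classical scalar Jensen inequality applies. Boundedness of $f_0$ by $f_0(0)$ mitigates the domain issue but does not dispense with the need for such regularity assumptions on $\omega$.
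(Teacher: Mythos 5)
Your overall strategy --- scalarize \eqref{stur2} into a threshold inequality for $\sqrt{\Delta_\omega A}$ at fixed $H_0=\omega(H(t))$, then bridge from $\omega(f_0\circ H(t))$ to $f_0(H_0)$ by Jensen --- is genuinely different from the paper's, but its linchpin fails as stated. The equality case of \eqref{stur2} in the variable $u=\sqrt{\Delta_\omega A}/\lambda_P$ is $u^3+Au^2-2\sqrt{3}A=0$ with $A=\sqrt{3}\,c/(\lambda_P H_0)$; the unique rescaling $u=3^{1/8}w$ does bring it to the form $w^3+kw^2-2\sqrt[4]{3}\,k=0$, but with $k=3^{3/8}c/(\lambda_P H_0)\propto H_0^{-1}$, whereas the Lemma prescribes $k=\sqrt{4c/3\lambda_P H}\propto H^{-1/2}$. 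No change of variable can reconcile the two $H$-dependences (matching the quadratic coefficient forces the constant term to scale as $H^{+1/2}$, not $H^{-1/2}$), so the asserted \emph{equivalence} of \eqref{stur2} with $\Delta_\omega A\geq\lambda_P^2 f_0(H_0)$ does not hold for the $f_0$ of the statement: your ``bookkeeping of the constants'' would not close. (There is a further ambiguity you inherit from the paper: the stated asymptotics $f_0\sim(c/H\lambda_P)^{2/3}$ and the proof both indicate that the cubic's root is $f_0^{1/4}$, not $f_0$ itself.) In addition, the convexity of $f_0$ on which your Jensen step rests is asserted rather than verified; note that for a single selfadjoint generator and a state, ordinary convexity plus the spectral theorem suffices, so operator convexity \`a la Kadison is not the issue --- the sign of $f_0''$ is.

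The paper avoids both problems by never pulling $H$ out of the state. It inserts \eqref{eq:general1},\eqref{eq:general2} into the left sides of \eqref{stur2},\eqref{stur3} to get the lower bound $\abs{\omega(f_0)}\bigl(\tfrac{1}{4\sqrt{3}}+\tfrac{\omega(H)\sqrt{\abs{\omega(f_0)}}\lambda_P}{12c}\bigr)$, then applies Cauchy--Schwarz twice, in the forms $\sqrt{\omega(a)}\geq\omega(\sqrt{a})$ and $\omega(H)\,\omega(f_0^{1/2})\geq\omega(H^{1/2}f_0^{1/4})^2$, to collapse everything into a single state evaluated on an operator that equals $I$ by the functional-calculus identity
\begin{equation*}
f_0^{1/2}\Bigl(\frac{1}{2\sqrt[4]{3}}+\frac{H^{1/2}f_0^{1/4}}{2\sqrt{3c/\lambda_P}}\Bigr)=I;
\end{equation*}
this identity is what the cubic actually encodes, and the Cauchy--Schwarz step is precisely where the $H^{1/2}$ in $k$ comes from. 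If you want to salvage your route, you must (i) replace the false equivalence by the one-sided scalar inequality $\tfrac{f_0}{4\sqrt{3}}+\tfrac{H\lambda_P f_0^{3/2}}{12c}\geq\tfrac12$, which does follow from squaring the displayed identity and using $a^2+b^2\geq\tfrac12(a+b)^2$, and (ii) actually prove convexity of $f_0$. As written, the proof has a genuine gap at its central step.
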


\begin{proof}
To start with, observe that $f_0\to\sqrt[8]{12}$ as $H\to0$ and $f_0\sim(c/H\lambda_P)^{2/3}$ as $H\to +\infty$. Moreover, $f_0 > 0$ for $H\geq 0$ and hence $f_0\circ H(t)$ is a positive (bounded) operator. Plugging inequalities \eqref{eq:general1} and \eqref{eq:general2} into the left hand sides of \eqref{stur2} and \eqref{stur3} respectively we obtain the smaller quantity
\begin{gather*}
\cF_\omega =\abs{\omega(f_0)}\!\left(\frac{1}{4\sqrt{3}} +\frac{\omega(H)\sqrt{\abs{\omega(f_0)}}\lambda_P}{12c}\right).
\end{gather*}
We now use a consequence of Schwartz inequality $\omega(a^* a)\omega(b^* b) \geq \omega(ab)^2$ for positive self-adjoint operators $a$, that is $\sqrt{\omega(a)}\geq \omega(\sqrt{a})$:
\begin{gather*}
\cF_\omega\geq \omega(f_0)\left(\frac{1}{4\sqrt{3}}+\frac{\omega(H)\omega(f^{1/2}_0)}{12c/\lambda_P}\right) \geq \omega(f_0)\left(\frac{1}{4\sqrt{3}}+\frac{\omega(H^{1/2}f^{1/4}_0)^2}{12c/\lambda_P}\right)\geq \\
\geq \frac{\omega(f^{1/2}_0)^2}{2}\omega\left(\frac{1}{2\sqrt[4]{3}}+\frac{H^{1/2}f^{1/4}_0}{2\sqrt{3c/\lambda_P}}\right)^2
\geq \frac{1}{2}\omega\left(f^{1/4}_0\sqrt{\frac{1}{2\sqrt[4]{3}}+\frac{H^{1/2}f^{1/4}_0}{2\sqrt{3c/\lambda_P}}}\right)^4.
\end{gather*}
It is now obvious that the  STUR \eqref{stur2},\eqref{stur3} will be satisfied as long as
\begin{gather*}
\omega\left(f^{1/4}_0\sqrt{\frac{1}{2\sqrt[4]{3}}+\frac{H^{1/2}f^{1/4}_0}{2\sqrt{3c/\lambda_P}}}\right)^4= 1.
\end{gather*}
is satisfied for any state $\omega$. But this is a consequence of the operator equality
\begin{equation}\label{eq:uations}
f^{1/2}_0\left(\frac{1}{2\sqrt[4]{3}}+\frac{H^{1/2}f^{1/4}_0}{2\sqrt{3c/\lambda_P}}\right)= I,
\end{equation}
which by functional calculus follows from the cubic equation in the statement.\\
\end{proof}

We now come to the main result of this section.

\begin{Proposition}\label{prop:frie1}
Let $M=I\times \Sigma$ be the background manifold, $t$ the universal time, $H(t)\colon I\to \bR^+$ be the associated Hubble parameter and $f_0\colon \bR^+\to \bR$ be the bounded smooth strictly positive function of Lemma \ref{fundamental}. Then there are operators $\eta_\mu$, $\mu=0,\dots,3$ on the Hilbert space $L^2(\bR^5, d\xi)\otimes L^2(SO(3),d\mu(R))$\footnote{Here and in what follows $d\mu(R)$ indicates the unique invariant Haar measure and $R\in SO(3)$.}, essentially selfadjoint on a domain $\cD$ to be specified below, such that for any state $\omega$ in their domain and any sufficiently regular strictly positive real function $f$ on $\bR^+$ the inequalities \eqref{stur2} and \eqref{stur3} hold true. Moreover, there is a unitary representation $U$ of the group $G=SO(3)\ltimes\bR^3$ such that
\begin{equation*}
U(R,a)\eta_i U(R,a)^{-1}=R_{ik} \eta_k + a_i I,\quad U(R,a)\eta_0 U(R,a)^{-1}=\eta_0,
\end{equation*}
for $(R,a)\in SO(3)\ltimes\bR^3$.
\end{Proposition}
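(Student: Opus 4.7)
By Lemma~\ref{fundamental}, it suffices to construct self-adjoint operators $\eta_\mu$ on $\sH:=L^2(\bR^5,d\xi)\otimes L^2(SO(3),d\mu(R))$ that satisfy the modified minkowskian-like STUR \eqref{eq:general1}, \eqref{eq:general2}, and to exhibit a unitary representation of $G=SO(3)\ltimes\bR^3$ implementing the claimed covariance. My plan mirrors the DFR strategy recalled in Section~\ref{secmink}: the $L^2(\bR^5)$ factor will carry one ``time'' coordinate together with two canonical Heisenberg pairs (the remaining variables playing the role of central ``noise''), while the $L^2(SO(3))$ factor is used to rotationally symmetrise the construction.

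Concretely, I would parametrise $\bR^5$ so that one coordinate $\xi_0$ yields $\hat t=\xi_0$ and $\hat E=-i\partial_{\xi_0}$, two further coordinates yield the Heisenberg pair $(\hat q_1,\hat p_1)$, and two more yield $(\hat q_2,\hat p_2)$. Via functional calculus of $\hat t$ set $h:=\lambda_P\sqrt{f_0\circ H(\hat t)}$, a bounded positive operator commuting with every $\hat q_k,\hat p_k$. One then chooses real-linear combinations
\begin{equation*}
\tilde\eta_j=h\bigl(\alpha_j\hat q_1+\beta_j\hat p_1+\gamma_j\hat q_2+\delta_j\hat p_2\bigr)+\varepsilon_j\hat E,\quad j=1,2,3,
\end{equation*}
with coefficients tuned so that each $[\tilde\eta_i,\tilde\eta_j]$ is $ih^2$ times a non-zero constant and each $[\hat t,\tilde\eta_j]=i\varepsilon_j I$ is non-zero. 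Introducing on $L^2(SO(3))$ the bounded multiplication operators $M_{ij}(R):=R_{ij}$ by the matrix-entry functions, I would set
\begin{equation*}
\eta_0:=\hat t\otimes I,\qquad \eta_i:=\sum_{j=1}^{3}\tilde\eta_j\otimes M_{ij},\quad i=1,2,3.
\end{equation*}
Essential self-adjointness on the invariant dense domain $\cD:=\cS(\bR^5)\otimes C^\infty(SO(3))$ is then standard for polynomials in the CCR generators multiplied by bounded functions of $\hat t$ and the bounded $M_{ij}$.

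For the STUR, since the two tensor factors commute,
\begin{equation*}
[\eta_i,\eta_j]=\sum_{k,l}[\tilde\eta_k,\tilde\eta_l]\otimes M_{ik}M_{jl},\qquad [\eta_0,\eta_i]=\sum_{j}[\hat t,\tilde\eta_j]\otimes M_{ij},
\end{equation*}
whose first slots are central multiples of $ih^2=i\lambda_P^2 f_0\circ H(\hat t)$ and of $i h\propto i\lambda_P\sqrt{f_0\circ H(\hat t)}$ respectively. Pair-by-pair application of Heisenberg's inequality, together with the $SO(3)$-orthogonality relation $\int R_{ij}R_{kl}\,d\mu(R)=\tfrac13\delta_{ik}\delta_{jl}$ used to bound the resulting rotational averages from below, yields \eqref{eq:general1} and \eqref{eq:general2}; Lemma~\ref{fundamental} then delivers \eqref{stur2}, \eqref{stur3}. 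For the covariance, the rotations are implemented by the right regular representation on $L^2(SO(3))$ (trivially on $L^2(\bR^5)$): under it $M_{ij}\mapsto(R_0)_{ik}M_{kj}$, whence $\eta_i\mapsto(R_0)_{ik}\eta_k$, while $\eta_0$ is fixed. Translations $a\in\bR^3$ are implemented by $V(a)=\exp(i a^i K_i)$ with generators $K_i$ assembled from the CCR momenta on $L^2(\bR^5)$ conjugated through the $M$-rotations on $L^2(SO(3))$, engineered to satisfy $[K_i,\eta_j]=i\delta_{ij} I$, $[K_i,\hat t]=0$, and the semidirect-product intertwining $U(R_0)V(a)U(R_0)^{-1}=V(R_0 a)$.

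The principal technical obstacle is the tuning of Step~2: with only a finite supply of canonical pairs on the first tensor factor, one must choose the twenty coefficients $\alpha_j,\beta_j,\gamma_j,\delta_j,\varepsilon_j$ so that the nine pre-spatial commutators $[\tilde\eta_i,\tilde\eta_j]$ and $[\hat t,\tilde\eta_j]$ are simultaneously non-degenerate and of the exact magnitudes required to feed Lemma~\ref{fundamental} with the sharp constants. A subsidiary difficulty lies in the translation half of the covariance: the generators $K_i$ must be constructed explicitly on $\sH$, shown to commute with $\hat t$ and with each other, and verified to combine with the rotation unitaries into a genuine unitary representation of $SO(3)\ltimes\bR^3$.
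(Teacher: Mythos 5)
Your outer architecture coincides with the paper's: reduce to \eqref{eq:general1}--\eqref{eq:general2} via Lemma \ref{fundamental}, build fiber operators on $L^2(\bR^5)$, symmetrise by multiplying with the matrix entries $R_{ij}$ on $L^2(SO(3))$, and implement rotations by the regular representation. The gap is that the fiber construction --- which is the actual content of the proof, supplied in the paper by Lemmas \ref{lem:infla} and \ref{lem:frie} --- is deferred to a ``tuning'' of coefficients, and the ansatz you give cannot be tuned into shape. With $\hat t=\xi_0$, $\hat E=-i\partial_{\xi_0}$ and $\tilde\eta_j=h(\hat t)A_j+\varepsilon_j\hat E$ you get $[\hat t,\tilde\eta_j]=i\varepsilon_j I$, a \emph{constant}, not the multiple of $\sqrt{f_0\circ H(\hat t)}$ you assert (functions of $\hat t$ commute with $\hat t$); and the space--space commutators acquire non-central cross terms $i\varepsilon_j h'(\hat t)A_i-i\varepsilon_i h'(\hat t)A_j$, so they are not ``$ih^2$ times a non-zero constant'' unless all $\varepsilon_j=0$, which kills the time--space relations. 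Moreover $\mathrm{sp}(\hat t)=\bR$ while $H$ is defined only on $I$, so $f_0\circ H(\hat t)$ is not even defined for big-bang cosmologies with $I=(0,\infty)$; the paper reparametrises time through a diffeomorphism $\gamma\colon\bR\to I$ precisely to control the spectrum of $t$.

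The paper's fiber operators are first-order differential operators with the non-constant coefficient $h=(f_0\circ H)\circ\gamma/\gamma'$, e.g. $\tilde\eta_1=\tfrac12\left[h(\xi_1),i\partial_{\xi_1}\right]_+$, arranged so that \emph{both} $[\tilde t,\tilde\eta_1]$ and $[\tilde\eta_1,\tilde\eta_2]$ equal $2if_0\circ H(\tilde t)$; and for such operators essential selfadjointness is \emph{not} ``standard'': the deficiency-index equation has square-integrable solutions near the big bang, which is exactly why Lemma \ref{lem:frie} doubles the variables (hence $L^2(\bR^5)$ rather than $L^2(\bR^3)$) and uses a flip symmetry to force equal deficiency indices and extract a selfadjoint extension. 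Your proposal does not address this. Finally, the Schur-orthogonality step does not go through as stated: the fiberwise variances are not linear in $R$, so $\int R_{ij}R_{kl}\,d\mu(R)=\tfrac13\delta_{ik}\delta_{jl}$ does not control them; the paper instead lower-bounds the integrand by the rotation-invariant quantity $\bigl(\sum_k\abs{\tilde{\omega}_{\phi_1}(\tilde m_k)}^2\bigr)^{1/2}$, which is constant over the group and can be pulled out of the integral. Until the fiber operators, their commutators, and their selfadjointness are actually produced, the proof is incomplete.
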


\begin{proof}
 Thanks to Lemma \ref{fundamental}, we only have to construct operators on some Hilbert space satisfying inequalities \eqref{eq:general1}, \eqref{eq:general2} \textit{and} a unitary strongly continuous representation $U$ of the isometry group $G=SO(3)$. Consider now the operators $(\tilde{t},\tilde{\eta}_i)$, $i=1,2,3$, on $L^2(\bR^5, d\xi)$ defined in Lemma \ref{lem:frie} of Appendix 1 and fix some $\gamma$ such that Im$(\gamma)=I$. Then, the first condition is perfecly met but we lack $U$. Following \cite{Pia2}, we thus proceed with a direct integral construction on the group $G$ and define our coordinate operators $(t=\eta_0,\eta_i)$, $i=1,2,3$ on $L^2(\bR^5, d\xi)\otimes L^2(SO(3),d\mu(R))$ setting
\begin{gather}
 t\left(\phi_1(\xi)\otimes \phi_2 (R)\right) =\tilde{t}\phi_1(\xi)\otimes \phi_2(R), \\
 \eta_i\left(\phi_1(\xi)\otimes \phi_2 (R)\right) =R_{ij}\tilde{\eta}_j\phi_1(\xi)\otimes \phi_2(R) = \tilde{\eta}_j\phi_1(\xi)\otimes \phi_2(R)R_{ij}
 \end{gather}
where $i,j=1,2,3$ and $R_{ij}=R_{ij}(R)$ indicates the matrix corresponding to $R$ in the defining three dimensional representation of $SO(3)$. Their commutation relations read
\begin{gather}\label{eq:noncommcomm}
[\eta_\mu,\eta_\nu ] = i\sqrt{3}\lambda_P^2 \cK_{\mu\nu}\qquad \cK_{\mu\nu}\left( \phi_1\otimes \phi_2\right) = R_{\mu\rho} R_{\nu\sigma}\tilde{A}^{\rho\sigma}\phi_1\otimes \phi_2, \\
R_{\nu\sigma}=\text{diag}\{1,R_{ij}\}, \;\; \tilde{A}_{\rho\sigma}= \{[\tilde{\eta}_\rho,\tilde{\eta}_\sigma ]\},\;\; \mu,\nu,\rho,\sigma=0,\cdots,3.\nonumber
\end{gather}
By construction we now get the unitary strongly continuous $SO(3)$-action
$$
U(R^\prime)\left(\phi_1(\xi)\otimes \phi_2(R)\right)=\phi_1(\xi)\otimes\phi_2(R^\prime R).
$$
We now define new unitaries $U(a)=\tilde{U}(a)\otimes I$ for $a\in\bR^3$, where $\tilde{U}$ is the representation of $\bR^3$ defined in Lemma \ref{lem:frie}. Then a direct calculation shows that the unitaries $U(R,a)= U(a)U(R)$ are such that
\begin{gather}\label{eq:friedcovariant}
U(R,a)\eta_\mu U(R,a)^{-1}=R_{\mu\nu} \eta^\nu + a_\mu I,\\
U(R,a)\cK_{\mu\nu} U(R,a)^{-1}=R_{\mu\rho} R^T_{\nu\sigma} \cK^{\rho \sigma},\nonumber
\end{gather}
as desired. We recall that the electric and magnetic parts
\begin{gather*}
\tilde{e}=\left(\tilde{A}_{01},\tilde{A}_{02},\tilde{A}_{03}\right)=2\left(f\circ H(\tilde{t}),0,0\right)\\
\tilde{m}=\left(\tilde{A}_{23},\tilde{A}_{13},\tilde{A}_{12}\right)=2\left(0,0,f\circ H(\tilde{t})\right)
\end{gather*}
respectively of the four tensor $\tilde{A}$ by definition transform under rotations as a pseudo-vector and a vector. As a consequence, we may write
\begin{equation}\label{electmagnet}
\cK_{0r} \left(\phi_1\otimes \phi_2\right)= R_{rl}\tilde{e}_l\phi_1\otimes \phi_2,\qquad \cK_{ij}\left(\phi_1\otimes \phi_2\right)= R_{ks}\tilde{m}_s \phi_1\otimes \phi_2.
\end{equation}
for $i,j,k,l,r,s=1,2,3$ and $i\neq j\neq k$.

We now show that inequalities \eqref{eq:general1},\eqref{eq:general2} are satisfied by the operators $(t=\eta_0,\eta_i)$, $i=1,2,3$. To do this, we will show that the left hand sides there are greater than a quantity which is rotation invariant and hence constant on fibers over $SO(3)$. Then, we will use \ref{lem:frie} of Appendix 1.\\
Consider the Hilbert space $L^2(\bR^5, d\xi)\otimes L^2(SO(3),d\mu(R))$. It is naturally unitarily equivalent to the space $L^2(SO(3),L^2(\bR^5, d\xi),d\mu(R))$ of square integrable functions on $SO(3)$ with values in $L^2(\bR^5, d\xi)$. For the sake of clarity we restrict to vector states defined by the dense (in the unit ball of $L^2(\bR^5, d\xi)\otimes L^2(SO(3),d\mu(R))$) set of vectors $\phi_1(\xi)\otimes \phi_2(R)$, with $\phi_1\in L^2(\bR^5, d\xi)$ and $\phi_2\in L^2(SO(3),d\mu(R))$ of unit norm. The general case would follow by abstract arguments \cite{sakai}. Evaluated on the operators $\eta_\mu$, they look like
\begin{gather*}
\omega_{\phi_1,\phi_2}(\eta_\mu)= \int \abs{\phi_2(R)}^2  \phi_1(\xi)^*(R_{\mu\nu}\tilde{\eta}_\nu \phi_1)(\xi) d\xi d\mu(R)=\\
=\int \tilde{\omega}_{\phi_1}(R_{\mu\nu}\tilde{\eta}_\nu)\abs{\phi_2(R)}^2 d\mu(R),
\end{gather*}
where $\tilde{\omega}_{\phi_1}$ indicates the vector state on operators on $L^2(\bR^5, d\xi)$ defined by $\phi_1$. Concerning variances, a simple adaptation of an argument in \cite{dop} gives:
\begin{gather}
\Delta_{\omega_{\phi_1,\phi_2}}\!(\eta_\mu)\geq \int \Delta_{\tilde{\omega}_{\phi_1}}\!(R_{\mu\nu}\tilde{\eta}_\nu)\abs{\phi_2(R)}^2 d\mu(R),\nonumber\\
\Delta_{\omega_{\phi_1,\phi_2}}\!(\eta_\mu)\Delta_{\omega_{\phi_1,\phi_2}}\!(\eta_\nu)\geq \int \Delta_{\tilde{\omega}_{\phi_1}}\!(R_{\mu\rho}\tilde{\eta}_\rho)\Delta_{\tilde{\omega}_{\phi_1}}\!(R_{\nu\sigma}\tilde{\eta}_\sigma)\abs{\phi_2(R)}^2 d\mu(R)\label{variance}.
\end{gather}
Concerning \eqref{eq:general2}, write
\begin{gather*}
\int \sum_{1\leq j<k\leq 3}\Delta_{\tilde{\omega}_{\phi_1}}\!(R_{jl}\tilde{\eta}_m)\Delta_{\tilde{\omega}_{\phi_1}}\!(R_{km}\tilde{\eta}_m)\abs{\phi_2(R)}^2 d\mu(R)\geq\\
\int \sqrt{\sum_{1\leq j<k\leq 3}\Delta_{\tilde{\omega}_{\phi_1}}\!(R_{jl}\tilde{\eta}_m)^2 \Delta_{\tilde{\omega}_{\phi_1}}\!(R_{km}\tilde{\eta}_m)^2}\abs{\phi_2(R)}^2 d\mu(R)\geq\\
\geq\frac{1}{2}\int \sqrt{\sum_{1\leq k\leq 3} \abs{\tilde{\omega}_{\phi_1}\!(R_{kl}\tilde{m}_k)}^2}\abs{\phi_2(R)}^2 d\mu(R)=\\
=\frac{1}{2}\sqrt{\sum_{1\leq k\leq 3} \abs{\tilde{\omega}_{\phi_1}\!(\tilde{m}_k)}^2}\int\abs{\phi_2(R)}^2 d\mu(R) =\\ =\lambda_P^2\omega_{\phi_1,\phi_2}\!\left(f\circ H(\tilde{t})\right) = \lambda_P^2\tilde{\omega}_{\phi_1,\phi_2}\!\left(f\circ H(t)\right),
\end{gather*}
where the first equality follows from rotation invariance of $\sum_k \abs{\tilde{\omega}_{\phi_1}\!(\tilde{m}_k)}^2$, the second one from \eqref{eq:bang} and we used \eqref{electmagnet}. Furthermore, the last equality comes from $\abs{\tilde{\omega}_{\phi_1}\!(\tilde{m}_3)}=\lambda_P^2\tilde{\omega}_{\phi_1}\!(f\circ H(\tilde{t}))$ and $\tilde{m}_1=\tilde{m}_2=0$. The proof of \eqref{eq:general1} involves the operators $\tilde e$ and goes along the same lines. 
\end{proof}

\section{Conclusions}
In this paper we wrote down physically motivated STUR to be satisfied in a quantum 
flat Friedmann spacetime and provided a concrete realisation of it writing down operators satisfying an appropriate weaker version of the STUR themselves. We made use of two key ingredients: an appropriate generalisation of Penrose's isoperimetric inequality, and a coordinate system made out of proper lengths measured along a suitable tetrad axis. Several consequences were deduced from our STUR and in particular we have shown that the presence of a particle horizon should naturally lead to the existence of a maximal value for the Hubble rate (or equivalently for the matter density), thus providing an indication that quantum effects may rule out a pointlike big bang singularity. Another interesting feature of our model is that it indicates that for expanding spacetimes the Planck length $\lambda_P$ should not be considered a fundamental length: an effective one appears depending on the Hubble rate $H$, \textit{i.e.} on the cosmological era. Finally, we costructed a covariant concrete realisation of the corresponding quantum Friedmann spacetime in terms of operators on some Hilbert space. 

A future line of research is the construction of quantum spacetimes also for open hyperbolic and closed Friedmann universes. In particular, in the case of closed universes these quantum models will allow us to explore the role of non-commutative effects near the big crunch
\cite{L3}.

\section*{Appendix 1}

In this appendix we prove some existence results for noncommutative coordinates suitable for the different models (\textit{i.e.} classical solutions of the Einstein equations) considered so far and on which the conclusions of Section \ref{sec:quantum} are based. For the sake of clarity, we first present concrete realisations in terms of operators acting on some Hilbert space of the basic commutation relations
$$
[\tilde{t},\tilde{\eta}_j ]=2if_0\left(H(\tilde{t})\right),
$$
without bothering about unitary actions of the isometry groups of the corresponding spacetimes. These will come later by using a ``covariantisation'' trick \cite{Pia2}. Since the time variable will not be touched upon by this procedure, to emphasise that the whole construction only depends on $H$, we can and will discuss the implementation of time diffeomorphism covariance at this preliminary stage.

To begin with, we recall some basic facts and terminology concerning (densely defined unbounded) linear operators on Hilbert spaces (see \cite{reed}). An operator $T$ on the Hilbert space $\cH$ with dense domain $\cD(T)\subset \cH$ is closed if its graph $\{ (x,Tx) \colon x\in\cD(T)\}$ is closed as a subspace of $\cH\times\cH$. An operator $T$ is called closable if the closure of its graph is the graph of an operator, usually indicated by $\overline T$. Given a closable operator $T$, we say that a subspace $\cC\subset\cD(T)$ is a core for $T$ if $\overline{T\restriction_{\cC}}=T$. We define the adjoint $T^*$ of a densely defined $T$ on $\cH$ with scalar product $(\cdot,\cdot)$ by setting $\phi\in\cD(T^*)$ if $\cD(T)\ni\psi\to (\phi,T\psi)$ can be extended to a bounded linear functional on the whole $\cH$. In this case there is a unique $\chi\in\cH$ such that $(\chi, \psi)=(\phi,T\psi)$ and we put $\chi=T^*\phi$. An operator $S$ is symmetric if $S\subset S^*$, meaning that $\cD(T)\subset\cD(T^*)$ \textit{and} $T^*\!\restriction_{\cD(T)}=T$. A symmetric operator is always closable and $S\subset\overline{S}\subset S^*$. A closed symmetric operator $T$ is selfadjoint if $T=T^*$ (as is the case of inclusion, equality here includes domains). The von Neumann's basic criterion states that a closed symmetric operator $S$ is selfadjoint if and only if Ker$(S^*\pm iI)=\emptyset$.
If we indicate by $\Delta_\pm$ the subspaces of solutions of the equations $(T^*+ iI)\phi=0$ and $(T^*- iI)\phi=0$ respectively, the dimensions dim$\Delta_\pm$ are called defect indices and selfadjoint extensions of $T$ exist if and only if they are equal.  If we indicate by $\Delta_\pm$ the subspaces of solutions of the equations $(T^*+ iI)\phi=0$ and $(T^*- iI)\phi=0$ respectively, the dimensions dim$(\Delta_\pm)$ are called defect indices and selfadjoint extensions $T^V$ of $T$ exist if and only if they are equal. The extensions $T^V$ satisfy $T\subset T^V\subset T^*$, are in one to one correspondence with the partial isometries $V\colon \Delta_+\to\Delta_-$ with initial space $I(V)$, have domains $\cD(T^V)=\{\phi+\phi_+ +U\phi_+\;|\;\phi\in\cD(T), \phi_+\in \Delta_+ \}$ and take the form
$$
T^V(\phi+\phi_+ +V\phi_+)=T\phi +i\phi_+ -iV\phi_+, \qquad \phi\in\cD(T), \phi_+\in \Delta_+.
$$
When dealing with commutators $[S,T]$ of unbounded operators $S,T$ on some $\cH$, we will indicate by $\cD_{\textrm{comm}}$ the set of all $\phi\in\cH$ such that both $ST\phi$ and $TS\phi$ are defined and in $\cH$.

A prominent example of unbounded operators are (partial) differential operators acting on $L^(\bR^d)$. The standard approach  (see \cite{schmu}) is to consider a ($n$-th order) partial differential expressions $\cL = \sum_{|\alpha|\leq n} a_\alpha(x) D^{\alpha}$,
with $\alpha =(\alpha_1, \dots, \alpha_d)$, $|\alpha|=\sum_i^d \alpha_i$ and $\alpha_i$ positive nonzero integers, $a_\alpha(x)\in C^\infty(\bR^d)$ and $D^\alpha = i^{|\alpha|}\partial_{\alpha_1}\dots \partial_{\alpha_1}$. One can then also define the \textit{formal adjoint}
$\cL^+ = \sum_{|\alpha|\leq n} D^{\alpha} a_\alpha(x)$, and view both $\cL$ and $\cL^+$ as operators on $L^2(\bR^d)$ with domain $C^\infty_c(\bR^d)$. The closure of $\cL^+$ is called the \textit{minimal operator} $L_{\textrm{min}}$. The \textit{maximal operator} $L_{\textrm{max}}$ is defined as follows. Take $\cD(L_{\textrm{max}})$ as the set of $\phi\in L^2(\bR^d)$ such that for some $\psi \in L^2(\bR^d)$ the equality $\psi=\cL\phi$ holds \textit{in the sense of distributions} and set $L_{\textrm{max}}\phi = g$. With these definitions, one easily proves that $L_{\textrm{min}}^*= L_{\textrm{max}}$. It follows that $\cL^*= L_{\textrm{max}}$ if $\cL=\cL^+$ as operators on $C^\infty_c(\bR^d)$, in which case $\cL$ is said to be formally selfadjoint.

\begin{Lemma}\label{lem:infla}
 Let $\tilde{\sH} =L^2(\bR^3, d\xi)$, $\gamma\colon \bR\to\emph{Im}(\gamma)\subset\bR$ be a diffeomorphism on its image $\emph{Im} (\gamma)=(\gamma(-\infty),\gamma(+\infty))$ and $H\colon \emph{Im}(\gamma)\to \bR^+$ be a smooth positive function  such that
\begin{equation}\label{condition}
\lim_{\tilde{t}\to\gamma(\pm\infty)}\int_{\gamma(0)}^{\tilde{t}}\frac{1}{f_0\circ H} =+\infty.
\end{equation}
Then the operators ($[\cdot,\cdot]_+$ indicates the anticommutator)
\begin{gather}
\tilde{t}= \tilde{\eta}_0 =2\gamma(\xi_1),\quad \tilde{\eta}_{1}= \frac{1}{2}\left[\frac{(f_0\circ H)(\gamma(\xi_1))}{\gamma'(\xi_1)},i\partial_{\xi_1}\right]_+\label{partialnoncomm}\\
\tilde{\eta}_{2}= 2\gamma(\xi_1) + i\partial_{\xi_2}, \qquad \tilde{\eta}_{3}= i\partial_{\xi_3}.\nonumber
\end{gather}
are essentially selfadjoint on the common invariant core $C_c^\infty(\bR^3)$ (the smooth functions with compact support) and satisfy the commutation relations
\begin{equation}\label{eq:bang}
[c\tilde{t},\tilde{\eta}_1 ]= 2if_0\circ H(\tilde{t}), \qquad \qquad [\tilde{\eta}_1, \tilde{\eta}_2 ]= 2if_0\circ H(\tilde{t}),
\end{equation}
there. All other commutators vanish. Moreover, $\emph{sp}(\tilde{t})=\emph{Im}(\gamma)$ and $\emph{sp}(\tilde{\eta}_k)=\bR$ for $k=1,2,3$. Finally, there is a strongly continuous unitary representation $\tilde{U}$ of $\bR^3$ such that $\tilde{U}(a)\tilde{\eta}_kU(a)^*=\tilde{\eta}_k +a_k$ and $\tilde{U}(a)\tilde{t}U(a)=\tilde{t}$, $\tilde{U}C_c^\infty(\bR^4)\subset C_c^\infty(\bR^4)$, $a=(a_1,a_2,a_3)$.
\end{Lemma}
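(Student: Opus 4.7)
The plan is to handle the four claims of the lemma in turn --- essential selfadjointness on $C_c^\infty(\bR^3)$, the commutation relations, the spectra, and the $\bR^3$-representation --- exploiting the tensor factorisation $L^2(\bR^3,d\xi)=L^2(\bR,d\xi_1)\otimes L^2(\bR,d\xi_2)\otimes L^2(\bR,d\xi_3)$. Most of the four operators split as sums of commuting operators acting on distinct tensor factors, so the analysis reduces to one variable at a time.

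For essential selfadjointness, $\tilde{t}=2\gamma(\xi_1)$ is multiplication by a real smooth function (hence essentially selfadjoint by the spectral theorem), $\tilde{\eta}_3=i\partial_{\xi_3}$ is the usual momentum operator, and $\tilde{\eta}_2$ is the sum of two commuting essentially selfadjoint operators acting on disjoint variables. The only nontrivial case is $\tilde{\eta}_1=\tfrac{1}{2}\{g,i\partial_{\xi_1}\}$ with $g(\xi_1)=(f_0\circ H)(\gamma(\xi_1))/\gamma'(\xi_1)$ smooth and strictly positive on $\bR$. I would introduce the change of variable $\eta(\xi_1)=\int_0^{\xi_1}\!ds/g(s)$, which is smooth and strictly monotone; rewriting the integral in \eqref{condition} via $\tilde{t}=\gamma(\xi_1)$ shows that \eqref{condition} is exactly the statement $\eta(\pm\infty)=\pm\infty$, so $\eta$ is a diffeomorphism $\bR\to\bR$. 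The unitary $W\colon L^2(\bR,d\xi_1)\to L^2(\bR,d\eta)$ defined by $(W\phi)(\eta)=g(\xi_1(\eta))^{1/2}\phi(\xi_1(\eta))$ then conjugates $\tilde{\eta}_1$ into $i\partial_\eta$, which is essentially selfadjoint on $C_c^\infty(\bR)$; tensoring with the identity on the remaining factors yields essential selfadjointness of $\tilde{\eta}_1$ on $C_c^\infty(\bR^3)$.

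The commutation relations I would verify by direct computation on $C_c^\infty(\bR^3)$: writing $\tilde{\eta}_1=ig\partial_{\xi_1}+\tfrac{i}{2}g'$, one gets $[\tilde{t},\tilde{\eta}_1]=-2ig\gamma'=-2i(f_0\circ H)(\gamma(\xi_1))$, matching \eqref{eq:bang} up to the conventions of the paper, and $[\tilde{\eta}_1,\tilde{\eta}_2]=[\tilde{\eta}_1,2\gamma(\xi_1)]=-[\tilde{t},\tilde{\eta}_1]$ because $i\partial_{\xi_2}$ commutes with $\tilde{\eta}_1$. All remaining commutators vanish because the operators involved act on disjoint tensor factors. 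The spectra then follow from the spectral theorem and the unitary equivalences just established: $\text{sp}(\tilde{t})=\text{Im}(\gamma)$; $\text{sp}(\tilde{\eta}_3)=\bR$; $\text{sp}(\tilde{\eta}_2)=\bR$ via a direct-integral decomposition along $\xi_1$ (each fiber being momentum plus a constant); and $\text{sp}(\tilde{\eta}_1)=\bR$ by unitary equivalence to $i\partial_\eta$.

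Finally, I would build the $\bR^3$-action as a product of three commuting multiplication-by-phase operators, $\tilde{U}(a_1,a_2,a_3)=e^{ia_1\eta(\xi_1)}e^{ia_2\xi_2}e^{ia_3\xi_3}$. Each factor is unitary and smooth, so $\tilde{U}(a)$ preserves $C_c^\infty(\bR^3)$; strong continuity in $a$ follows from dominated convergence, and the group law from commutativity. Using $\eta'=1/g$, a short computation gives $\tilde{U}(a)\tilde{\eta}_k\tilde{U}(a)^{-1}=\tilde{\eta}_k+a_k$ for $k=1,2,3$, while $\tilde{U}(a)\tilde{t}\tilde{U}(a)^{-1}=\tilde{t}$ because $\tilde{t}$ depends only on $\xi_1$ and $\tilde{U}$ acts there as multiplication. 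The main obstacle in the whole argument is the essential selfadjointness of $\tilde{\eta}_1$, and it is precisely here that hypothesis \eqref{condition} is used decisively: without it the change of variable $\eta$ would cover only a proper open subinterval of $\bR$, and $i\partial_\eta$ on a half-line or bounded interval typically has nonzero defect indices, obstructing selfadjointness.
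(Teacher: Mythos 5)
Your proposal is correct, and for most of the lemma (the tensor factorisation, the commutators computed on $C_c^\infty$, the spectra via covariance, and the phase unitaries $e^{ia_1\eta(\xi_1)}e^{ia_2\xi_2}e^{ia_3\xi_3}$, which coincide with the paper's $\exp i(a_1\int^{\xi_1}1/h+a_2\xi_2+a_3\xi_3)$) it matches the paper's argument. The one step where you genuinely diverge is the essential selfadjointness of $\tilde{\eta}_1$: the paper proceeds by von Neumann's criterion, writing down the deficiency solutions $\phi_\pm = C\,h^{-1/2}\exp\bigl(\mp\int_0^{\xi_1}1/h\bigr)$ of $(\tilde{\eta}_1^*\pm iI)\phi=0$ explicitly and using \eqref{condition} to show they fail to be square integrable at both ends, whereas you conjugate $\tilde{\eta}_1$ by the unitary $W$ induced by the change of variable $\eta=\int_0^{\xi_1}1/h$ into the standard momentum operator $i\partial_\eta$ on $L^2(\bR)$, with \eqref{condition} guaranteeing that $\eta$ ranges over all of $\bR$. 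The two arguments have identical computational content (your deficiency obstruction on a proper subinterval is exactly the paper's $\phi_\pm$ becoming normalizable at a finite endpoint), but yours buys a little more: it gives $\mathrm{sp}(\tilde{\eta}_1)=\bR$ for free and packages the role of hypothesis \eqref{condition} more transparently, at the cost of having to check that $W$ maps $C_c^\infty(\bR)$ onto $C_c^\infty(\bR)$ (which it does, since $\eta$ is a diffeomorphism of $\bR$ and $h>0$ is smooth). Your remark that the computed sign of $[\tilde{t},\tilde{\eta}_1]$ matches \eqref{eq:bang} only up to convention is fair; the paper's own computation in the subsequent lemma exhibits the same sign ambiguity, so this is not a defect of your argument.
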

\begin{proof}
To start with, recall that for operators $T_1, T_2$ on $\cH_1,\cH_2$ essentially selfadjoint on the domains $\cD_1,\cD_2$ the operators $T_1\otimes T_2$ and $T_1\otimes I+I\otimes T_2$ on $\cH_1\otimes\cH_2$ are essentially selfadjoint on $\cD_1\otimes\cD_2$ \cite{reed}. Thus, $\tilde{t},\tilde{\eta}_2,\tilde{\eta}_3$ are essentially selfadjoint on $C_c^\infty(\bR^3)\supset \otimes_{i=1}^3 C_c^\infty(\bR)$ just because $\xi_1,\xi_2,\xi_3$ are so on $C_c^\infty(\bR)$ as operators on $L^2(\bR)$. Concerning $\tilde{\eta}_1$, this is a differential operator on $C_c^\infty(\bR)\subset L^2(\bR)$ and a simple integration by parts shows that its adjoint $\tilde{\eta}_{1}^*$ satisfies $\tilde{\eta}_{1}^*\restriction_{C_c^\infty(\bR)}=\tilde{\eta}_{1}$. Thus, $\tilde{\eta}_1$ is formally selfadjoint and $\tilde{\eta}_{1}^*$ is densely defined with domain included in the space of continuous functions having locally integrable derivatives. Thus, the condition Ker$(\tilde{\eta}_{1}^*\pm iI)=\emptyset$ for essential selfadjointness reduces to the differential equation
\begin{gather}
ih(\xi_1)\phi'+\frac{i}{2}h'(\xi_1)\phi\pm i\phi=0,\qquad h(\xi_1)=(f_0\circ H)(\gamma(\xi_1))/\gamma'(\xi_1).\label{eq:h}
\end{gather}
The solutions $\phi_{\pm}(\xi_1)= C h^{-1/2}\exp{\left(\mp \int_0^{\xi_1}1/h\right)}$, with $C\in\bR$, are not in $L^2(\bR)$ whenever both
\begin{gather*}
\lim_{\xi_1\to\pm\infty} \int_0^{\xi_1}\frac{d\xi_1}{h(\xi_1)} = \lim_{\xi_1\to\pm\infty}\int_0^{\xi_1}\frac{\gamma'(\xi_1)d\xi_1}{(f_0\circ H)(\gamma(\xi_1))} =\\
=\lim_{\xi_1\pm\infty} \int_{\gamma(0)}^{\gamma(\xi_1)}\frac{d\tilde{t}}{(f_0\circ H)(\tilde{t})}=
\lim_{\tilde{t}\to\gamma(\pm\infty)} \int_{\gamma(0)}^{\tilde{t}}\frac{d\tilde{t}}{(f_0\circ H)(\tilde{t})},
\end{gather*}
diverge. The invariance of $C_c^\infty(\bR^3)$ under the action of our operators is obvious. The operator $(f_0\circ H)(\tilde{t})$ being bounded since the function $f_0\circ H$ is bounded, it follows that the commutation relations \eqref{eq:bang} hold there.\\
The fact that sp$(\tilde{t})=\textrm{Im}(\gamma)$ comes directly from functional calculus. To prove that sp$(\tilde{\eta}_k)=\bR$, $k=1,2,3$, we show that we have commuting unitaries such that $U C_c^\infty(\bR)\subset C_c^\infty(\bR)$ and $U(a_k)\tilde{\eta}_k U(a_k)^*=\tilde{\eta}_k+a_k$, $a_k\in\bR$. Clearly, the multiplication operators $\exp{i\left(a_1\int^{\xi_1} 1/h +a_2\xi_2 +a_3\xi_3\right)}$ do the job.
\end{proof}

From Lemma \ref{fundamental} we know that $f_0\to\sqrt[8]{12}$ as $H\to0$ and $f_0\sim(c/H\lambda_P)^{2/3}$ as $H\to +\infty$. Thus in reasonable physical situations we may safely assume that the limit for $\tilde{t}\to +\infty$ does satisfy condition \eqref{condition} but unfortunately this is not the case for $\tilde{t}\to \tilde{t}_0=\gamma(-\infty)$. As a matter of fact,
$$
\lim_{\tilde{t}\to\gamma(-\infty)}\int_{\gamma(0)}^{\tilde{t}} H(\tilde{t})^{2/3}
$$
does not diverge, even for power law cosmologies.
However, as we now show this difficulty may be circumvented by a simple trick.
\begin{Lemma}\label{lem:frie}
Let $\tilde{\sH} =L^2(\bR^5, d\xi)$, $\gamma\colon
\bR\to\emph{Im}(\gamma)\subset\bR$ be a diffeomorphism on its image
$\emph{Im}(\gamma)=(\gamma(-\infty),\gamma(+\infty))$ and $H\colon
\emph{Im}(\gamma)\to \bR^+$ be a smooth strictly positive function. Then the operators
\begin{gather}
\tilde{t}= \tilde{\eta_0}=2\gamma(\xi_1),\quad \tilde{\eta}_{2}= 2\gamma(\xi_1) + i\partial_{\xi_4}, \quad \tilde{\eta}_{3}= i\partial_{\xi_5}, \label{partialnoncomm2}
\end{gather}
(here $h$ is the same function as in \eqref{eq:h}) are essentially selfadjoint on $C_c^\infty(\bR^5)$. As defined on the same domain, the (closure of the) operator
\begin{equation}
\tilde{\eta}_{1}= ih(\xi_1)\partial_{\xi_1} +\frac{i}{2}h'(\xi_1)-ih(\xi_2)\partial_{\xi_2} -\frac{i}{2}h'(\xi_2) +i\partial_{\xi_3},
\end{equation}
is symmetric and admits a selfadjoint extension $\tilde{\eta}_{1}^V$. Together, they satisfy the commutation relations \eqref{eq:bang} on the corresponding domain $\cD_{\emph{comm}}$.  Moreover, $\emph{sp}(\tilde{t})=\emph{Im}(\gamma)$ and $\emph{sp}(\tilde{\eta}_k)=\bR$ for $k=1,2,3$. Finally, there is a strongly continuous unitary representation $U$ of $\bR^3$ such that $\tilde{U}(a)\tilde{\eta}_kU(a)^*=\tilde{\eta}_k +a_k$ and $\tilde{U}(a)\tilde{t}U(a)=\tilde{t}$, $\tilde{U}C_c^\infty(\bR^5)\subset C_c^\infty(\bR^5)$ and $\tilde{U}(a)\tilde{\eta}_k\tilde{U}(a)^*=\tilde{\eta}_k +a_k$, $a=(a_1,a_2,a_3)$.
\end{Lemma}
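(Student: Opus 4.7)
The strategy is to extend the argument of Lemma \ref{lem:infla}, whose single weak point is the integrability condition \eqref{condition}, by exploiting the fact that the ``doubled'' form of $\tilde{\eta}_1$ together with the two extra dimensions of $L^2(\bR^5,d\xi)$ are engineered precisely so that a symmetry argument replaces the explicit computation of $L^2$-solutions of the deficiency equations.

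I would first dispose of the easy generators. The operators $\tilde{t}=2\gamma(\xi_1)$, $\tilde{\eta}_2=2\gamma(\xi_1)+i\partial_{\xi_4}$ and $\tilde{\eta}_3=i\partial_{\xi_5}$ are each (a sum of) a multiplication by a smooth real function of $\xi_1$ and a momentum operator $i\partial_{\xi_j}$ acting on a distinct variable; since the two summands strongly commute and are individually essentially selfadjoint on $C_c^\infty(\bR)$, the standard tensor-product lemma of \cite{reed} gives essential self-adjointness on $C_c^\infty(\bR^5)$. Functional calculus yields $\emph{sp}(\tilde{t})=\emph{Im}(\gamma)$, and translation invariance along $\xi_4,\xi_5$ gives $\emph{sp}(\tilde{\eta}_2)=\emph{sp}(\tilde{\eta}_3)=\bR$.

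The heart of the proof is the treatment of $\tilde{\eta}_1=L_1-L_2+i\partial_{\xi_3}$, where $L_k=ih(\xi_k)\partial_{\xi_k}+\tfrac{i}{2}h'(\xi_k)$. An integration by parts on $C_c^\infty(\bR^5)$ (the zero-order term was tailored for precisely this) shows that $\tilde{\eta}_1$ is formally selfadjoint, so its closure is symmetric. To exhibit a selfadjoint extension I would construct an anti-unitary involution $J$ on $L^2(\bR^5,d\xi)$ commuting with $\tilde{\eta}_1$: set $J=C\circ W\circ P_3$, where $C$ is complex conjugation, $W$ is the swap $\xi_1\leftrightarrow\xi_2$, and $P_3$ is the parity $\xi_3\mapsto-\xi_3$. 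All three operations preserve $C_c^\infty(\bR^5)$. The swap $W$ turns $L_1-L_2$ into $L_2-L_1$, the parity $P_3$ flips the sign of $i\partial_{\xi_3}$, and $C$ flips every remaining $i$; the three sign-flips combine so that $J\tilde{\eta}_1 J^{-1}=\tilde{\eta}_1$ on the core. Being anti-unitary, $J$ bijects $\ker(\tilde{\eta}_1^*+iI)$ onto $\ker(\tilde{\eta}_1^*-iI)$, forcing the deficiency indices to coincide, and von Neumann's criterion then produces a (highly non-unique) selfadjoint extension $\tilde{\eta}_1^V$. The sign-bookkeeping in verifying $J\tilde{\eta}_1 J^{-1}=\tilde{\eta}_1$ is the fiddliest part of the argument and the main obstacle, but the precise form of $\tilde{\eta}_1$ in the statement is engineered to make this cancellation happen.

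The remaining items follow by direct computation. The only non-trivial commutators $[c\tilde{t},\tilde{\eta}_1]$ and $[\tilde{\eta}_1,\tilde{\eta}_2]$ on $C_c^\infty(\bR^5)\subset\cD_{\textrm{comm}}$ both reduce to the one-variable identity $[L_1,2\gamma(\xi_1)]=2i(f_0\circ H)(\gamma(\xi_1))$, since all other pairs of generators act on disjoint variables; the relations then extend to $\cD_{\textrm{comm}}$ by continuity. For the covariance I set $\tilde{U}(a)=\exp i(a_1\xi_3+a_2\xi_4+a_3\xi_5)$: this is a strongly continuous unitary representation of $\bR^3$ that preserves $C_c^\infty(\bR^5)$, commutes with $\tilde{t}$, and shifts each $\tilde{\eta}_k$ by $a_k$, since $\tilde{\eta}_k$ contains exactly one momentum operator $i\partial_{\xi_{k+2}}$ with unit coefficient. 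Fourier-transforming in $\xi_3$ displays $\tilde{\eta}_1^V$ as a direct integral over $k_3\in\bR$ of selfadjoint extensions of $L_1-L_2$ shifted by $-k_3$, from which both compatibility with the covariance and $\emph{sp}(\tilde{\eta}_1)=\bR$ follow.
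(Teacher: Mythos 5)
Your proof is correct, and its overall architecture matches the paper's: dispose of $\tilde t,\tilde\eta_2,\tilde\eta_3$ by the tensor-product criterion, isolate the $(\xi_1,\xi_2)$ part of $\tilde\eta_1$, exploit the $\xi_1\leftrightarrow\xi_2$ swap symmetry to get equal deficiency indices, verify the commutators from the single identity $[ih(\xi_1)\partial_{\xi_1},2\gamma(\xi_1)]=2i(f_0\circ H)(\tilde t)$, and use the multiplication unitaries $e^{i(a_1\xi_3+a_2\xi_4+a_3\xi_5)}$ for covariance and for $\mathrm{sp}(\tilde\eta_k)=\bR$. The one step where you genuinely diverge is the existence of the selfadjoint extension. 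The paper first solves the von Neumann equations $(T^*\pm iI)\phi_\pm=0$ explicitly (obtaining a family of solutions parametrised by an arbitrary function $u$, hence infinite and equal deficiency indices) and then uses the \emph{unitary} flip $V\phi(\xi_1,\xi_2)=\phi(\xi_2,\xi_1)$, which anticommutes with $T$ and so maps $\Delta_+$ onto $\Delta_-$, to single out a concrete extension $T^V$; the full $\tilde\eta_1^V=T^V+i\partial_{\xi_3}$ is then essentially selfadjoint on $\cD(T^V)\otimes C_c^\infty(\bR)$. You instead build the \emph{anti-unitary} conjugation $J=C\circ W\circ P_3$ commuting with $\tilde\eta_1$ on the core (your sign bookkeeping is right: $W$ and $C$ each flip the $L_1-L_2$ part, $P_3$ and $C$ each flip $i\partial_{\xi_3}$, so both parts are preserved) and invoke von Neumann's conjugation theorem. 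Your route is shorter and avoids the PDE computation entirely, at the price of being non-constructive: the paper's version hands you the deficiency spaces and an explicit $T^V$, which is what its statement of the extension theory in the Appendix is set up to deliver. Your direct-integral description of $\tilde\eta_1^V$ after Fourier transform in $\xi_3$ is a legitimate alternative to the paper's tensor-sum argument for both the spectrum and the compatibility with the translations, though the paper gets $\mathrm{sp}(\tilde\eta_1)=\bR$ more simply from translation covariance of the spectrum. The only soft spot is the phrase that the commutation relations ``extend to $\cD_{\mathrm{comm}}$ by continuity'': the paper instead computes $\tilde\eta_1^V\tilde\eta_0\phi$ directly for $\phi\in\cD_{\mathrm{comm}}$ via the product rule for distributional derivatives, which is the cleaner justification since $\cD_{\mathrm{comm}}$ is defined pointwise rather than by closure; this is a presentational matter, not a gap.
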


\begin{proof}
The very same arguments at the beginning of the proof of Lemma \ref{lem:infla} show that we can restrict our attention to the differential operator $T=ih(\xi_1)\partial_{\xi_1} +\frac{i}{2}h'(\xi_1)-ih(\xi_2)\partial_{\xi_2} -\frac{i}{2}h'(\xi_2)$ with domain $C_c^\infty(\bR^2)\subset L^2(\bR^2)$. Integration by parts shows it is formally selfadjoint. Indicating by $\cL_1$ the same differential expression than $T$ but with derivatives acting in the sense of distributions, by the above discussion we know that its adjoint $T^*=\cL_1$  on its domain $\cD(T^*)= \{\phi\in L^2(\bR^2) \;|\;\exists\psi\in L^2(\bR^2)\; \textrm{with}\; \psi=\cL_1\phi\}$. By construction the defect indexes are equal and the operator $T$ always admits sefadjoint extensions (trivial ones, if it is already essentially selfadjoint on $C_c^\infty(\bR^2)$). The von Neumann equations $(T^*\pm iI)\phi=0$ take the form of the partial differential equations
$$
\left( ih(\xi_1)\partial_{\xi_1}+\frac{i}{2}h^\prime(\xi_1) -ih(\xi_2)\partial_{\xi_2}-\frac{i}{2}h(\xi_2)h^\prime(\xi_2)\pm i\right)\phi_\pm=0.
$$
A straightforward calculation shows that for any sufficiently regular $u\colon \bR\to\bR$ the functions
$$
\phi_\pm(\xi_1,\xi_2)= \frac{Ce^{\mp \frac{1}{2}\left(\int_0^{\xi_1}1/h-\int_0^{\xi_2}1/h\right)}}{\sqrt{h(\xi_1)h(\xi_2)}} u\left(\int_0^{\xi_1}1/h +\int_0^{\xi_2}1/h\right),\quad C\in\bR,
$$
are solutions. This is enough to see that the defect indexes are both infinite. 

Consider then the ``flip'' unitary operator $V :L^2(\bR^2)\to L^2(\bR^2)$ given by $(V\phi)(\xi_1,\xi_2)= \phi(\xi_2,\xi_1)$ for $\phi\in L^2(\bR^2)$. By construction $VC_c^\infty(\bR^2)\subset C_c^\infty(\bR^2)$ and $V\tilde{\eta}_1V^*=-\tilde{\eta}_1$, so that $V\Delta_+=\Delta_-$ and we can consider the extension $T^V$ on the domain $\sD(T^V)=\{\psi\in L^2(\bR^2)\colon \psi = \phi+\phi_+ +V\phi_+, \; \phi\in C_c^\infty(\bR^2), \phi_+\in \Delta_+\}$ where, being a restriction of $T^*$, it acts as $\cL_1$. It follows that the operator $\tilde{\eta}_1^V=T^V+i\partial_{\xi_3}$ is essentially selfadjoint on the domain $\sD(T^V)\otimes C_c(\bR)$. Since $\tilde{\eta}_0$ is a multiplication operator by a smooth function, we can use the product rule for derivation of distributions to obtain (recall that $f_0\circ H$ is bounded smooth by assumption)
\begin{gather*}
\tilde{\eta}_1^V \tilde{\eta}_0\phi= \cL_1 2\gamma(\xi_1)\phi= ih(\xi_1)\partial_{\xi_1}(2\gamma(\xi_1)\phi)+ ih'(\xi_1)\gamma(\xi_1)\phi-\\
- ih(\xi_2)\partial_{\xi_2}(2\gamma(\xi_1)\phi)-ih'(\xi_2)\gamma(\xi_1)\phi +i\partial_{\xi_3}(2\gamma(\xi_1)\phi)=\\
=2i(f_0\circ H)(\gamma(\xi_1))\phi + 2\gamma(\xi_1)\cL_1\phi = 2i(f_0\circ H)(\tilde{t})\phi + \tilde{\eta}_0 \tilde{\eta}_1^U \phi,
\end{gather*}
for any $\phi\in \cD_{\textrm{comm}}\subset L^2(\bR^5)$. It follows immediately that the commutation relations \eqref{eq:bang} are satisfied for such $\phi$. The inclusion $C^\infty_c(\bR^5)\subset \cD_{\textrm{comm}}$ is obvious.\\
The proof that sp$(\tilde{t})=\textrm{Im}(\gamma)$ and sp$(\tilde{\eta}_k)=\bR$, $k=1,2,3$ goes along the same lines than in Lemma \ref{lem:infla}, but now we choose the multiplication operators $\tilde{U}(a)=\exp{i\left(a_1\xi_3+ a_2\xi_4 +a_3\xi_5 \right)}$.
\end{proof}
 
 \section*{Acknowledgements} 
We would like to thank Sergio Doplicher for his constant support and inspiration, Giorgio Immirzi for several conversations and suggestions. The patience and help of Gerardo Morsella were invaluable.

\end{document}